\newtheorem{theorem}{Theorem}
\newtheorem{lemma}{Lemma}
\newtheorem{definition}{Definition}
\newtheorem{corollary}{Corollary}
\newtheorem{remark}{Remark}
\newtheorem{example}{Example}
\begin{document}

\title{Distribution-Preserving Integrated Sensing and Communication with Secure Reconstruction}
%
%

%
\begin{comment}
\author{%
\IEEEauthorblockN{Yiqi Chen\IEEEauthorrefmark{1},
                Tobias Oechtering\IEEEauthorrefmark{2},
                Mikael Skoglund\IEEEauthorrefmark{2},
                and Yuan Luo\IEEEauthorrefmark{1}}\\
\IEEEauthorblockA{\IEEEauthorrefmark{1}%
              Shanghai Jiao Tong University,  
              Shanghai,
              China, 
              \{chenyiqi,yuanluo\}@sjtu.edu.cn}\\
\IEEEauthorblockA{\IEEEauthorrefmark{2}%
              KTH Royal Institute of Technology,  
              100 44 Stockholm,
              Sweden, 
              \{oech,skoglund\}@kth.se}
}
\end{comment}
\author{
  \IEEEauthorblockN{Yiqi Chen\IEEEauthorrefmark{1},
  Tobias Oechtering\IEEEauthorrefmark{2},
  Holger Boche\IEEEauthorrefmark{3},
  Mikael Skoglund\IEEEauthorrefmark{2},
  and Yuan Luo\IEEEauthorrefmark{1}}\\
\IEEEauthorblockA{\IEEEauthorrefmark{1}%
Shanghai Jiao Tong University,  
Shanghai,
China, 
\{chenyiqi,yuanluo\}@sjtu.edu.cn}\\
\IEEEauthorblockA{\IEEEauthorrefmark{2}%
KTH Royal Institute of Technology,  
100 44 Stockholm,
Sweden, 
\{oech,skoglund\}@kth.se}\\
\IEEEauthorblockA{\IEEEauthorrefmark{3}%
Technical University of Munich,  
80290 Munich, Germany, 
boche@tum.de}
\thanks{Yiqi Chen and Yuan Luo were supported in part by  National Key R\&D Program of China under Grant 2022YFA1005000, and  National Natural Science Foundation of China under Grant 62171279.}\\
 \thanks{Holger Boche was supported by the German Federal Ministry of Education and Research (BMBF) within the programme of ”Souverän. Digital. Vernetzt”, research HUB 6G-life, project identification number: 16KISK002, and by the BMBF Quantum Projects  QuaPhySI, Grant 16KIS1598K, and QUIET, Grant 16KISQ093.}
}

% The paper headers

\maketitle
\thispagestyle{empty}
% As a general rule, do not put math, special symbols or citations
% in the abstract or keywords.
\begin{abstract}
Distribution-preserving integrated sensing and communication with secure reconstruction is investigated in this paper. In addition to the distortion constraint, we impose another constraint on the distance between the reconstructed sequence distribution and the original state distribution to force the system to preserve the statistical property of the channel states. An inner bound of the distribution-preserving capacity-distortion region is provided with some capacity region results under special cases. A numerical example demonstrates the tradeoff between the communication rate, reconstruction distortion and distribution preservation. Furthermore, we consider the case that the reconstructed sequence should be kept secret from an eavesdropper who also observes the channel output. An inner bound of the tradeoff region and a capacity-achieving special case are presented.
\end{abstract}

% Note that keywords are not normally used for peerreview papers.

% For peer review papers, you can put extra information on the cover
% page as needed:
% \ifCLASSOPTIONpeerreview
% \begin{center} \bfseries EDICS Category: 3-BBND \end{center}
% \fi
%
% For peerreview papers, this IEEEtran command inserts a page break and
% creates the second title. It will be ignored for other modes.
\IEEEpeerreviewmaketitle

\section{introduction}
Integrated sensing and communication (ISAC) is regarded as a promising feature of future wireless communication techniques such as 6G\cite{schwenteck20236g}.  It opens up completely new use cases and possibilities for users and operators for 6G, but it also poses special challenges for the design of communication systems. In particular, the misuse of ISAC could significantly increase the attack surface for security and privacy attacks on 6G \cite{fettweis20216g}\cite{fettweis20226g}. However, security and privacy are essential prerequisites for a trustworthy 6G system \cite{fettweis20216g}\cite{fettweis20226g}. Therefore, it is of fundamental importance to further develop the theoretical foundations of ISAC to prevent misuse of ISAC technology.

In \cite{sutivong2005channel}, the author addressed the rate-distortion tradeoff for decoder side sensing, with the channel states known at the encoder side noncausally. Further developments can be found in \cite{kim2008state} for state amplification and \cite{zhang2011joint} where the states are not available at the encoder. In \cite{ahmadipour2022information}, the sender is interested in not only the information transmission, but also the estimation of the channel states for discrete channels. Fundamental limits of the capacity-distortion tradeoff for several channel models were provided. Extensions on ISAC over Gaussian channels were provided in \cite{xiong2023fundamental}. The secure ISAC system was studied in \cite{gunlu2023secure}, where the feedback signal is used for both sensing and secret key generation to protect the transmitted message. ISAC with binary input and additive white Gaussian noise was studied in \cite{gunlu2023secure2}. Source coding with distribution-preservation can be found in \cite{wagner2022rate,chen2022rate,saldi2015output}

\begin{figure}
    \centering
    \includegraphics[scale=0.45]{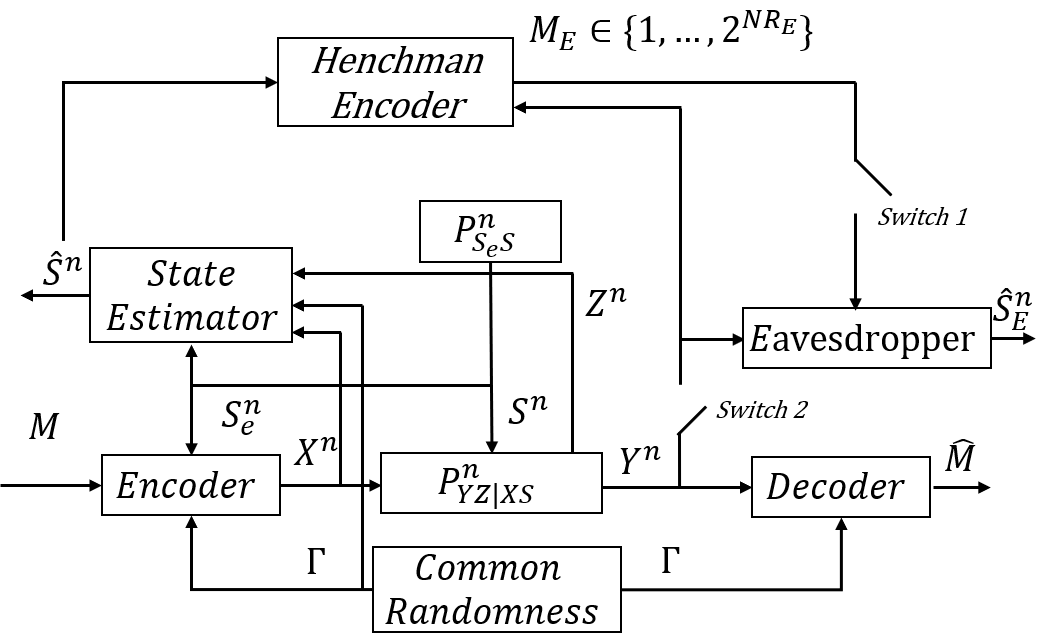}
    \caption{{\footnotesize The state estimator has all the information at the encoder side with an additional feedback sequence $z^n$. It tries to reconstruct the state sequence preserving the distribution. The henchman shares all the information at the state estimator and communicates to the eavesdropper with a limited rate.}}
    \label{fig: isac model}
\end{figure}
This paper considers a distribution-preserving ISAC system in that we not only reconstruct the state sequence within a given distortion constraint, but also require the distribution of the reconstructed sequence to be close to the original state distribution. This model is motivated by the fact that the ultimate goal of sensing is to learn the environment and design a better communication strategy for the system using the sensed state data. By preserving the state distribution, one can build a system for a specific environment and hence avoid the effect of the distribution mismatch problem. 
 To better understand the impact of ISAC on the security and privacy attack surface for future communication systems such as 6G, we further introduce a Henchman and Eavesdropper model into the ISAC framework and investigate the corresponding rate-distortion region.
%The approach is to preserve the statistical properties of the random channel states to help the communication system adjust better to the environment.  Since the system reacts based on the state estimation, we also consider the security of the reconstruction.

\section{model and results}
Throughout this paper, random variable, sample value and its alphabet are denoted by capital, lowercase letters, and calligraphic letters, respectively, e.g. $X$, $x$, and $\mathcal{X}$. Symbols $X^n$ and $x^n$ represent random sequence and its sample value with length $n$. The distribution of a random variable $X$ is denoted by $P_X$ and the joint distribution of a pair of random variables $(X,Y)$ is denoted by $P_{XY}$. The expectation of a function of the random variable $X$ is written as $\mathbb{E}_X\left[ f(X) \right]$. 
Consider a communication system where the sender receives a feedback sequence after sending a codeword to the receiver through a state-dependent channel and then estimates the channel state based on the knowledge it has. Furthermore, we assume that the sender has a noisy observation of the channel state sequence in a noncausal manner, which represents some prior knowledge of the environment at the encoder side. Compared to the existing ISAC model\cite{ahmadipour2022information} that has a distortion constraint on the reconstructed sequence, the state estimator here is imposed an additional distribution constraint. In more detail, we want the distribution of the reconstructed sequence to be close to the original sequence under a given distance metric. 

From a more practical point of view, for any state-dependent system, which is governed by an i.i.d. generated state sequence $S^n$ and equipped with an ISAC system, the sensing result $\hat{S}^n$ is used for further processing, which is usually characterized by a bounded function $\kappa: \mathcal{S}^n \to \mathbb{R}$ such that $\max_{\hat{s}^n}|\kappa(\hat{s}^n)|\leq \Upsilon$, where $\Upsilon$ is a positive number and $\mathcal{S}$ is the common alphabet of $S$ and $\hat{S}$. We denote the distributions of $S^n$ and $\hat{S}^n$ by $P^n_S=\prod_{i=1}^n P_S$ and $P_{\hat{S}^n}$ for a moment. The processing functions $\kappa$ vary depending on different objectives of the system, with a consensus that the processing result of the estimated sequence should be close to that of the original sequence, regardless of what the function $\kappa$ is. We control this gap between the estimation and the original sequences by controlling

\begin{small}
    \begin{align}
    \label{neq: expectation gap}\sup_{\kappa:\; \max_{\hat{s}^n}|\kappa(\hat{s}^n)|\leq \Upsilon}\left|\mathbb{E}_{P^n_S}\left[ \kappa(S^n) \right] - \mathbb{E}_{P_{\hat{S}^n}}\left[ \kappa(\hat{S}^n) \right] \right|,
\end{align}
\end{small}
which can be further bounded by 
\begin{small}
    \begin{align*}
    &\sup_{\substack{\kappa:\; \max_{\hat{s}^n}|\kappa(\hat{s}^n)|\leq \Upsilon}}\left|\mathbb{E}_{P^n_S}\left[ \kappa(S^n) \right] - \mathbb{E}_{P_{\hat{S}^n}}\left[ \kappa(\hat{S}^n) \right] \right| \\
    &\overset{(a)}{=} \max_{\substack{\kappa:\; \max_{\hat{s}^n}|\kappa(\hat{s}^n)|\leq \Upsilon}}\left|\mathbb{E}_{P^n_S}\left[ \kappa(S^n) \right] - \mathbb{E}_{P_{\hat{S}^n}}\left[ \kappa(\hat{S}^n) \right] \right|\\
    &\leq \Upsilon \sum_{s^n\in\mathcal{S}^n} \left| P^n_S(s^n)-P_{\hat{S}^n}(s^n)  \right|  
\end{align*}
\end{small}
where $(a)$ is by the fact that the set of functions $\kappa$ is a compact set. This implies that with a reconstructed distribution that is sufficiently close to the original one, we can control the expectation of the processing result defined in  \eqref{neq: expectation gap} no matter which processing function is used.
We further consider the case where there exists a henchman in the state estimator and an eavesdropper who both observe the channel output. The henchman sends messages to the eavesdropper with a limited rate, and the goal of the henchman is to reconstruct the state as well. The model is illustrated in Fig. \ref{fig: isac model}.
In the following, we consider the model without secure constraint, which is the case that Switches 1 and 2 are turned off.
A code used for such a communication system is defined as follows.
\begin{definition}
    A common randomness (CR)-assisted code $(n,R,R_c)$ for distribution preserving ISAC with noisy channel state information (NCSI) consists of
    \begin{itemize}
        \item A message set $\mathcal{M}=[1:2^{nR}]$,
        \item Common randomness available both at the encoder and decoder $\Gamma\in[1:2^{nR_c}]$,
        \item An encoder $f:\mathcal{M}\times\mathcal{S}^n_e \times \Gamma \to \mathcal{X}^n$,
        \item A decoder $g:\mathcal{Y}^n\times \Gamma \to \mathcal{M}$,
        \item A state estimator $h:\Gamma\times\mathcal{X}^n\times\mathcal{S}^n_e\times\mathcal{Z}^n\to\hat{\mathcal{S}}^n$.
    \end{itemize}
\end{definition}
Throughout this paper, we use total variation as the distance metric between two distributions and denote it by $||P-Q||$ for distributions $P$ and $Q$ on the same alphabet.
\begin{definition}
    Given non-negative real numbers $(D,\Delta)$, a pair $(R,R_c)$ is said to be $(D,\Delta)-$achievable if for any $\epsilon>0$ and sufficiently large $n$, there exists an $(n,R,R_c)$ code such that
        \begin{small}
            \begin{equation}\label{def: DP achievable}
            \begin{split}
                Pr\{M\neq \hat{M}\} \leq \epsilon,\;\mathbb{E}\left[ d(S^n,\hat{S}^n) \right] \leq D,|| P_{\hat{S}^n} - P_S^n || \leq \Delta,
            \end{split}
        \end{equation}
        \end{small}
where $P_S$ is the distribution of the channel state.
\end{definition}
Given the state source distribution $P_{S_eS }$ with marginal distribution $P_S=\psi$ and channel $P_{YZ|XS}$, let $\mathcal{P}(U,X,\hat{S})$ be the set of the joint distribution $P_{S_eSUXYZ\hat{S}}$ such that
\begin{small}
    \begin{align*}
    &P_{S_eS UXYZ\hat{S}}(s_e,s ,u,x,y,z,\hat{s})=P_{S_eS }(s_e,s )P_{U|S_e}(u|s_e)\\
    &\quad\quad\quad\quad P_{X|US_e}(x|u,s_e)P_{YZ|XS}(y,z|x,s)P_{\hat{S}|XS_eZ}(x,s_e,z).
\end{align*}
\end{small}
Define a set of distributions $\mathcal{P} = \{P_{S_eS UXYZ\hat{S}}\in\mathcal{P}(U,X,\hat{S}): P_{S}=P_{\hat{S}}=P_S\}$.
The following theorem gives an inner bound on the $(D,0)-$capacity-distortion region, where the state distribution is perfectly preserved in the reconstruction. 
\begin{theorem}{(Inner bound)}\label{the: capacity perception distortion with cr and noisy CSI}
    An inner bound of the CR-assisted distribution-preserving capacity-distortion  region  is
\begin{small}
    \begin{equation*}
    R(D,0) = \left\{
    \begin{aligned}
        (R, R_c)\in\mathbb{R}^2: &R\geq 0, R_c \geq 0, \exists P_{S_eS UXYZ\hat{S}} \in \mathcal{P},\\
        &R \leq I(U;Y )-I(U;S_e),\\
        &R_c \geq \max\{I(U;\hat{S}) - I(U;Y ),0\},\\
        &D \geq \mathbb{E}[d(S,\hat{S})].
    \end{aligned}
    \right.
\end{equation*}  
\end{small}
\end{theorem}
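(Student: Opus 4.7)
The plan is to establish achievability by random coding, interleaving Gelfand--Pinsker coding (to carry $M$ over the state-dependent channel using the noncausal encoder-side information $S_e^n$) with the likelihood-encoder/soft-covering technique (to enforce the perfect distribution-preservation condition $\|P_{\hat S^n}-P_S^n\|\to 0$). The common randomness $\Gamma$ plays two roles: it indexes a Gelfand--Pinsker subcodebook used by the joint-typicality decoder for $M$, and it supplies the extra rate that soft-covering of $\hat S^n$ requires when $I(U;\hat S)$ exceeds the decodable rate $I(U;Y)$.

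\textbf{Scheme.} Fix a joint distribution from $\mathcal{P}$ and $\delta>0$. Generate $2^{n(R+R_c+\tilde R)}$ codewords $u^n(m,\gamma,l)$ i.i.d.\ from $P_U^n$, with Gelfand--Pinsker bin rate $\tilde R=I(U;S_e)+\delta$; if $R$ is strictly smaller than $I(U;Y)-I(U;S_e)$, pad the codebook with dummy-message indices so that the total codebook rate is pushed up to $I(U;\hat S)+\delta$. The encoder, given $(m,\gamma,s_e^n)$, selects $L$ by a likelihood encoder so that the conditional law of $u^n(m,\gamma,L)$ given $s_e^n$ is close to $P_{U|S_e}^n$, then generates $X^n$ symbol-wise from $P_{X|US_e}$. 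The decoder applies joint-typicality decoding of $(\hat M,\hat L)$ from $(Y^n,\Gamma)$, which succeeds when $R+\tilde R<I(U;Y)$, yielding $R\le I(U;Y)-I(U;S_e)$. The state estimator, given $(\Gamma,X^n,S_e^n,Z^n)$, applies the test channel $P_{\hat S|XS_eZ}$ symbol-by-symbol.

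\textbf{Analysis.} By the soft-covering lemma, the induced code law on $(U^n,S_e^n,X^n,Y^n,Z^n,\hat S^n)$ is total-variation close to the i.i.d.\ product $P_{US_eXYZ\hat S}^n$ as soon as the total codebook rate exceeds $I(U;\hat S)$. Under the rate allocation above this reduces to $R+R_c+I(U;S_e)\ge I(U;\hat S)+\delta$, which combined with the Gelfand--Pinsker inequality $R\le I(U;Y)-I(U;S_e)$ (and absorbing any slack in $R$ into dummy rate) simplifies exactly to $R_c\ge\max\{I(U;\hat S)-I(U;Y),0\}$. Total-variation closeness of the induced distribution to the product law then gives both $\|P_{\hat S^n}-P_S^n\|\to 0$ (the $\Delta=0$ distribution constraint) and $\mathbb E[d(S^n,\hat S^n)]\to\mathbb E[d(S,\hat S)]\le D$ (after truncation if $d$ is unbounded). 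A standard averaging over codebooks and expurgation step extracts a deterministic $(n,R,R_c)$ code meeting all constraints.

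\textbf{Main obstacle.} The technical core is running soft-covering in tandem with Gelfand--Pinsker covering. Soft covering wants the induced $(U^n,S_e^n)$ to look i.i.d.\ under $P_{US_e}^n$ so that the channel output and the memoryless kernel $P_{\hat S|XS_eZ}$ inherit the correct joint law, while Gelfand--Pinsker insists on choosing $U^n$ compatible with the realized $s_e^n$. The delicate part is balancing the covering rate $\tilde R$ against the resolvability rate $R+R_c+\tilde R\ge I(U;\hat S)$, and verifying that the state estimator effectively recovers the intended $U^n$ from $(X^n,S_e^n,Z^n)$ so that the perception constraint is genuinely tight at $\Delta=0$ in the limit; this is where the likelihood-encoder analysis, rather than a pure joint-typicality argument, is essential.
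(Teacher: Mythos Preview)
Your overall architecture is the same as the paper's: Gel'fand--Pinsker style random binning, a likelihood encoder for the bin index, joint-typicality decoding under the ``idealized'' uniform-over-codebook distribution, soft covering to control the induced law of $\hat S^n$, and the Fourier--Motzkin (or, in your case, dummy-message padding) step to translate the three raw rate inequalities into the final constraints on $(R,R_c)$. So the skeleton is right.

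There is, however, one genuine gap. Soft covering gives you $\|P_{\hat S^n}-P_S^n\|\le \tfrac{3}{2}e^{-cn}$, i.e.\ the total variation is \emph{exponentially small but strictly positive} for every finite $n$. Under the paper's Definition~2, $(D,0)$-achievability requires $\|P_{\hat S^n}-P_S^n\|\le 0$, i.e.\ \emph{exact} equality $P_{\hat S^n}=P_S^n$, for all sufficiently large $n$; a vanishing limit is not enough. The paper closes this gap with an optimal-transport post-processing step: after producing the soft-covering reconstruction $\hat S^n$, the estimator passes it through an optimal coupling $T_{\hat S'^n|\hat S^n}$ so that the output $\hat S'^n$ has law exactly $P_S^n$. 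Because the Wasserstein cost of this coupling is controlled by the total variation already driven to $e^{-cn}$, the extra distortion incurred is $\hat T_n(P_{\hat S^n},P_S^n)\le D_{\max}\cdot\tfrac32 e^{-cn}$, which is absorbed into the distortion budget. Without this step your scheme only proves $(D,\Delta)$-achievability for every $\Delta>0$, not the claimed $\Delta=0$.

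A minor remark: your ``Main obstacle'' paragraph speaks of the state estimator ``recovering the intended $U^n$ from $(X^n,S_e^n,Z^n)$''. That is not what happens, nor is it needed. The estimator never touches $U^n$; it simply applies the memoryless kernel $P_{\hat S|XS_eZ}$. The correctness of the $\hat S^n$ law comes entirely from the soft-covering rate condition $R+R_p+R_c\ge I(U;\hat S)$ applied at the codebook level, not from any decoding at the estimator.
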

The first constraint is the reliable communication constraint which has a similar form as the Gel'fand-Pinsker coding rate, and the second constraint is the lower bound on the common randomness we need to preserve the distribution of the state source. The total randomness rate we need to preserve the distribution is $I(U;\hat{S})$, and the reduction $I(U;Y )$ comes from the selection of messages and the stochastic encoder.

\emph{Sketch of the proof: }Fix the underlying distribution $P_{S_eS UXYZ\hat{S}}=P_{S_eS }P_{U|S_e}P_{X|US_e}P_{YZ|XS}P_{\hat{S}|XZS_e}$ and define non-negative real numbers $R,R_c,R_p$ satisfying
\begin{small}
    \begin{align*}
    &R + R_p \leq I(U;Y ),\;\;R_p \geq I(U;S_e),R + R_p + R_c \geq I(U;\hat{S}).
\end{align*}
\end{small}

\emph{Codebook Generation: } For each message $m\in[1:2^{nR}]$, generate a subcodebook $\mathcal{C}(m)=\{u^n(m,i,\gamma):i\in[1:2^{nR_p}],\gamma\in[1:2^{nR_c}]\}$. The whole codebook is denoted by $\mathcal{C}=\cup_{m\in\mathcal{M}}\mathcal{C}(m)$.

For a given codebook $\mathcal{C}$ and underlying distribution, define distributions
\begin{small}
\begin{align}
    &Q_{M I \Gamma S^n_e S^n   X^n Y^n Z^n \hat{S}^n}=\frac{1}{2^{n(R+R_p+R_c)}} P_{S_eS XYZ\hat{S}|U}^n \notag \\
     \label{def: idealized distribution}&= \frac{1}{2^{n(R+R_p+R_c)}}Q_{S^n_e |M I \Gamma } Q_{S^n   X^n Y^n Z^n \hat{S}^n|M I \Gamma S^n_e },\\
    \label{def: induced distribution}&\bar{Q}_{M I \Gamma S^n_e S^n   X^n Y^n Z^n \hat{S}^n}=\frac{P_{S_e}^n P_{I|M\Gamma S^n_e} Q_{S^n   X^n Y^n Z^n \hat{S}^n|M I \Gamma S^n_e}}{2^{n(R+R_c)}},
\end{align}
\end{small}
where $Q_{S^n_e |M I \Gamma }(s^n_e|m,i,\gamma)= P^n_{S_e|U}(s^n_e|u^n(m,i,\gamma)),$
\begin{small}
    \begin{align*}
    &P_{I|M\Gamma S^n_e}(i|m,\gamma,s^n_e)=\frac{P^n_{S_e|U}(s^n_e|u^n(m,i,\gamma))}{\sum_{i}P^n_{S_e|U}(s^n_e|u^n(m,i,\gamma))},
\end{align*}
\end{small}
\noindent and $Q_{ S^n   X^n Y^n Z^n \hat{S}^n|M I \Gamma S^n_e} =P^n_{S |S_e}P^n_{X|US_e}P^n_{YZ|XS}P^n_{\hat{S}|XS_eZ}.$
Note that
\begin{small}
    \begin{align*}
    Q_{I|M\Gamma S^n_e}(i|m,\gamma,s^n_e)=P_{I|M\Gamma S^n_e}(i|m,\gamma,s^n_e).
\end{align*}
\end{small}
We call the distribution $Q$ as the idealized distribution\cite{goldfeld2019wiretap} since it is the distribution when we select codewords from the codebook uniformly at random. The distribution $\bar{Q}$ is the induced distribution since it is induced by the distribution $Q_{I|M\Gamma S^n_e}$, which is used as our likelihood encoder in the following coding scheme.

\emph{Encoding: } To transmit the message $m$ with an informed encoder having access to the common randomness $\gamma$ and noisy side information $s^n_e$, the encoder uses a likelihood encoder\cite{song2016likelihood,goldfeld2019wiretap} and picks the index randomly according to $Q_{I|M\Gamma S^n_e}(i|m,\gamma,s^n_e)$.
The codeword $x^n$ is then generated by
\begin{align*}
    P_{X|US_e}^n(x^n|u^n(m,i,\gamma),s^n_e).
\end{align*}

\emph{Decoding: } The decoder observes the channel output $y^n$ and has access to the common randomness $\gamma$. Both are used to look for a unique $(\hat{m},\hat{i})$ such that
$(u^n(\hat{m},\hat{i},\gamma),y^n)\in T^n_{P_{U Y},\delta}.$

\emph{State Estimation: } The state estimator observes the feedback $z^n$. It generates a sequence $\hat{s}^n$ according to $P^n_{\hat{S}|XS_eZ}(\hat{s}^n|x^n,s^n_e,z^n) $.

By the soft-covering lemma, with a sufficiently large bin size, the underlying distribution $P$ is close to the idealized distribution $Q$, which is close to the true distribution $\bar{Q}$ induced by the encoder/decoder pair when the size of the codebook is sufficiently large. By the triangle inequality of the total variation, we have $||\bar{Q}-P||\leq \epsilon$ with $\epsilon$ which can be arbitrarily small with a sufficiently large block length $n$. To achieve perfect preservation, i.e. $\epsilon=0$, we use an optimal transport argument to construct a new sequence having the desired distribution with an additional distortion that decays exponentially fast with block length $n$. The decoding error is first analyzed under the distribution $Q$ and then is upper bounded by the total variation between $Q$ and the true distribution $\bar{Q}$. The region in Theorem \ref{the: capacity perception distortion with cr and noisy CSI} is derived using the Fourier-Motzkin elimination of the following region.

\begin{small}
    \begin{equation}
     R'(D,0) = \left\{
    \begin{aligned}\label{eq: equivalent region}
        &(R,R_p, R_c)\in\mathbb{R}^3: R\geq 0, R_p \geq 0, R_c \geq 0,\\
        &\exists P_{S_eS UXYZ\hat{S}} \in \mathcal{P},\\
        &R + R_p \leq I(U;Y ),\\
        &R_p \geq I(U;S_e)\\
        &R + R_p + R_c \geq I(U;\hat{S}),\\
        &D \geq \mathbb{E}[d(S,\hat{S})].
    \end{aligned}
    \right.
\end{equation}
\end{small}
  The detailed proof is provided in Appendix \ref{app: proof of thm noisy csi}.
\begin{comment}
Fix a joint distribution $P\in\mathcal{P}$. The proof consists of two steps. Step 1 is a combination of random binning coding and applying the soft-covering lemma\cite{cuff2013distributed} to our model. We consider an `idealized' distribution $Q$ induced by selecting codeword $U^n(M,I,\Gamma)$ uniformly at random from the codebook and send it to a discrete memoryless channel $P^n_{S_eSXYZ\hat{S}}$ defined by the joint distribution in the theorem, where $M$ is the message, $I$ is the index of the codeword within the corresponding bin, $\Gamma$ is the common randomness to enlarge the size of the codebook. By the soft-covering lemma, with sufficiently large bin size, the underlying distribution $P$ is close to the idealized distribution $Q$, which is close to the true distribution $\bar{Q}$ induced by the encoder/decoder pair when the size of the codebook is sufficiently large. By the triangle inequality of the total variation, we have $||\bar{Q}-P||\leq \epsilon$ with $\epsilon$ that can be arbitrarily small with a large block length $n$. To achieve the perfect preservation, i.e. $\epsilon=0$, we use an optimal transport argument to construct a new sequence with the desired distribution and a little bit more but negligible additional distortion. The decoding error is firstly analyzed under distribution $Q$ and then upper bounded by the total variation between $Q$ and the true distribution $\bar{Q}$. The detailed proof is provided in Appendix \ref{app: proof of thm noisy csi}.
\end{comment}
    \begin{remark}\label{rem: equivalent region remark}
    The constraint $R_p$ is the randomness that comes from a likelihood encoder. However, it is not all the randomness at the encoder side. Note that the channel input sequence $X^n$ is generated by distribution $P_{X|US_e}$ instead of a fixed function $x(u,s_e)$ as possible for the Gel'fand-Pinsker coding problem due to the additional randomness constraint related to $I(U;\hat{S})$.

    The constraints on $R_p$ and $R+R_P+R_c$ in \eqref{eq: equivalent region} imply that when $I(U;S_e) \geq I(U;\hat{S})$ we can always set $R_c=0$ regardless of what value $R$ is. However, as shown by Theorem \ref{the: capacity perception distortion with cr and noisy CSI}, a weaker condition $I(U;Y )\geq I(U;\hat{S})$ is sufficient for zero common randomness rate. In the following, we show that this condition does not change with the Fourier–Motzkin Elimination. It is sufficient to show that when $I(U;S_e) < I(U;\hat{S})\leq I(U;Y )$, the corner points in Theorem \ref{the: capacity perception distortion with cr and noisy CSI} fall into the projection of the region $R'(D,0)$ on $R$ and $R_c$, which means that we can always find a $R_p$ such that $(R,R_p,R_c=0)\in R'(D,0)$. We start with point $(0,\max\{I(U;\hat{S}) - I(U;Y ),0\})=(0,0)$. In this case, we set $R_p = I(U;Y )$ and the constraints in $R'(D,0)$ are all satisfied. For the point $(I(U;Y )-I(U;S_e),0)\neq(0,0)$, we set $R_p = I(U;S_e)$ and it follows that $R+R_p+R_c=I(U;Y )\geq I(U;\hat{S})$.
\end{remark}

\subsection{No Common Randomness}
The main difference between our problem and distribution-preserving lossy compression is that in the lossy compression problem, the encoder and the decoder communicate through a noiseless channel with unlimited capacity. Hence, one can remove the common randomness and compensate for it with a sufficiently large compression rate. However, in our problem, the communication rate is limited by the channel capacity. Hence, when common randomness is not available in the communication system, the distribution-preserving reconstruction is not always available. Define 
\begin{small}
    \begin{align*}
    \mathcal{P}_{NCR} &= \{P_{S_eS UXYZ\hat{S}}\in\mathcal{P}(U,X,\hat{S}): P_{S}=P_{\hat{S}},  I(U;\hat{S})\leq I(U;Y )\}.
\end{align*}
\end{small}

\begin{corollary}\label{coro: no common randomness inner bound}
    If $\mathcal{P}_{NCR}\neq \emptyset$, an inner bound of the distribution-preserving capacity-distortion region without common randomness is
    \begin{small}
        \begin{equation*}
        R_{NCR}(D,0) = \left\{
        \begin{aligned}
            (R,0)\in\mathbb{R}^2: &R\geq 0, \exists P_{S_eS UXYZ\hat{S}} \in \mathcal{P}_{NCR},\\
            &R  \leq I(U;Y ) - I(U;S_e),\\
            &D \geq \mathbb{E}[d(S,\hat{S})].
        \end{aligned}
        \right.
    \end{equation*} 
    \end{small}
%If $\mathcal{P}_{NCR} = \emptyset$, then it is impossible to preserve the distribution of the reconstructed sequences.
\end{corollary}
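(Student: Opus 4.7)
The plan is to derive this corollary as a direct specialization of Theorem~\ref{the: capacity perception distortion with cr and noisy CSI}, exploiting Remark~\ref{rem: equivalent region remark} to confirm that zero common randomness is compatible with the coding scheme whenever $\mathcal{P}_{NCR}$ is nonempty. The essence is that the expression $\max\{I(U;\hat{S})-I(U;Y),0\}$ for the required $R_c$ in Theorem~\ref{the: capacity perception distortion with cr and noisy CSI} collapses to $0$ on any distribution in $\mathcal{P}_{NCR}$, so setting $R_c=0$ preserves the achievability of all $R\leq I(U;Y)-I(U;S_e)$ together with the distortion inequality $D\geq \mathbb{E}[d(S,\hat{S})]$.

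Concretely, I would pick any $P_{S_e S UXYZ\hat{S}}\in \mathcal{P}_{NCR}$ and verify that the pre-Fourier–Motzkin region $R'(D,0)$ in \eqref{eq: equivalent region} contains a triple of the form $(R,R_p,0)$ with $R$ arbitrary in $[0,I(U;Y)-I(U;S_e)]$. As Remark~\ref{rem: equivalent region remark} points out, for the corner point $R=0$ one takes $R_p=I(U;Y)$, and for $R=I(U;Y)-I(U;S_e)$ one takes $R_p=I(U;S_e)$; the condition $I(U;\hat{S})\leq I(U;Y)$ provided by membership in $\mathcal{P}_{NCR}$ is exactly what makes the soft-covering constraint $R+R_p+R_c\geq I(U;\hat{S})$ hold in both cases with $R_c=0$. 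Convexity of $R'(D,0)$ in $(R,R_p)$ for fixed $P$ then interpolates between these corners, so every $R\in[0,I(U;Y)-I(U;S_e)]$ is achievable with $R_c=0$.

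At the operational level, this amounts to observing that the codebook used in the proof of Theorem~\ref{the: capacity perception distortion with cr and noisy CSI} has total size $2^{n(R+R_p+R_c)}$, and the soft-covering lemma only cares about this product rate; when $R+R_p$ alone exceeds $I(U;\hat{S})$, the common-randomness bin index can be dispensed with entirely, and the likelihood encoder together with the Gel'fand–Pinsker-type bin of size $2^{nR_p}\geq 2^{nI(U;S_e)}$ continues to produce an induced distribution close in total variation to the idealized distribution $Q$. The optimal-transport step that promotes $\epsilon$-closeness in total variation to exact distribution matching at the cost of a vanishing additive distortion is unchanged, and the decoding-error analysis remains valid since it only invokes closeness of $\bar Q$ to $P$.

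The main obstacle, in fact the only substantive point, is the assumption $\mathcal{P}_{NCR}\neq\emptyset$: without it the specialization is vacuous, and indeed the excerpt's opening discussion emphasizes that perfect distribution preservation need not be achievable at all when the communication rate is bounded. Assuming nonemptiness, the remainder is a routine restriction of the ambient region; no new coding ingredient beyond those already developed for Theorem~\ref{the: capacity perception distortion with cr and noisy CSI} is required.
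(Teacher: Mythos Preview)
Your proposal is correct and follows essentially the same route as the paper: the corollary is obtained by specializing Theorem~\ref{the: capacity perception distortion with cr and noisy CSI} with $R_c=0$, noting that the defining constraint $I(U;\hat S)\le I(U;Y)$ of $\mathcal P_{NCR}$ forces $\max\{I(U;\hat S)-I(U;Y),0\}=0$ so that the lower bound on $R_c$ is met trivially. Your additional appeal to Remark~\ref{rem: equivalent region remark} and the pre-elimination region $R'(D,0)$ is a valid (and slightly more explicit) way to confirm the same conclusion.
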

The communication rate $R_{NCR}(D)$ is in general smaller than that in the region $R(D,0)$ in Theorem \ref{the: capacity perception distortion with cr and noisy CSI} due to the additional constraint $I(U;\hat{S})<I(U;Y )$ on the input distribution. When there is no common randomness available, all the randomness comes from message selection and stochastic encoder, whose rates are limited by the channel capacity. If the rate of reliable communication is not sufficient for the soft-covering lemma, then the distribution preservation is impossible.

The region in Corollary \ref{coro: no common randomness inner bound} is in general not tight. However, if the state $S=(S_e,S_e' )$, and the communication model is a deterministic main channel such that $P_{YZ|XS}=\mathbb{I}\{Y=y(X,S_e)\}P_{Z|XS_eS_e' }$, where $y(X,S_e)$ is some function $\mathcal{X}\times\mathcal{S}_e \to \mathcal{Y}$, then we have the following capacity region.
\begin{corollary}\label{coro: no common randomness capacity}
    For deterministic channels, the distribution-preserving capacity-distortion tradeoff region without common randomness is
    \begin{small}
        \begin{equation*}
        C_{NCR}(D,0) = \left\{
        \begin{aligned}
            &(R,0)\in\mathbb{R}^2: R\geq 0, \exists P_{XS_e YZ\hat{S}} \in \mathcal{P}_{NCR}^{DE},\\
            &R  \leq H(Y|S_e),\\
            &D\geq \mathbb{E}[d(S,\hat{S})],
        \end{aligned}
        \right.
    \end{equation*} 
    \end{small}
    where $\mathcal{P}_{NCR}^{DE} = \{P_{XS_eS_e' YZ\hat{S}}=P_{S_eS_e' }P_{X|S_e}\mathbb{I}\{Y=y(X,S_e)\}P_{Z|XS_e S_e'}P_{\hat{S}|XS_eZ}:  P_{S}=P_{\hat{S}}\}.$
\end{corollary}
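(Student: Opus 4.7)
The plan is to derive the achievability by specializing the inner bound of Corollary \ref{coro: no common randomness inner bound} with a judicious choice of auxiliary, and to prove the matching converse via Fano's inequality, the deterministic main-channel structure, and a standard time-sharing identification.

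\textbf{Achievability.} I would set $U = Y$ in $\mathcal{P}_{NCR}$. Since the factorization in $\mathcal{P}_{NCR}^{DE}$ already enforces $X - S_e - S_e'$ and $Y = y(X, S_e)$, the induced $U$ satisfies $U - S_e - S_e'$ and is admissible in $\mathcal{P}(U, X, \hat{S})$. A direct computation gives
\begin{align*}
I(U; Y) - I(U; S_e) = H(Y) - I(Y; S_e) = H(Y \mid S_e),
\end{align*}
while the common-randomness constraint $I(U; \hat{S}) \leq I(U; Y)$ reduces to $I(Y; \hat{S}) \leq H(Y)$, which always holds. Hence every $P_{XS_eS_e'YZ\hat{S}} \in \mathcal{P}_{NCR}^{DE}$ with $\mathbb{E}[d(S, \hat{S})] \leq D$ yields an achievable pair $(R, 0)$ satisfying $R \leq H(Y \mid S_e)$.

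\textbf{Converse.} Fano's inequality together with $M \perp S_e^n$ yields $nR \leq I(M; Y^n \mid S_e^n) + n\epsilon_n \leq H(Y^n \mid S_e^n) + n\epsilon_n \leq \sum_{i=1}^{n} H(Y_i \mid S_{e,i}) + n\epsilon_n$. Introducing a time-sharing variable $T$ uniform on $[1{:}n]$ and the single-letter identifications $X = X_T$, $S_e = S_{e,T}$, $S_e' = S_{e,T}'$, $Y = Y_T$, $Z = Z_T$, $\hat{S} = \hat{S}_T$ turns this into $R \leq H(Y \mid S_e) + \epsilon_n$, and both the distortion bound and the total-variation preservation condition single-letterize by averaging over $T$.

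\textbf{Main obstacle.} The delicate step is verifying that the time-shared joint actually belongs to $\mathcal{P}_{NCR}^{DE}$. The Markov chain $X - S_e - S_e'$ follows from the i.i.d.\ source together with the fact that $X_i$ is a function of $(M, S_e^n)$; the identities $Y = y(X, S_e)$ and $P_{Z \mid X S_e S_e'} = P_{Z \mid XS}$ hold coordinate-wise. The subtle factor is $P_{\hat{S} \mid X S_e Z}$, which requires $\hat{S} \perp S_e' \mid (X, S_e, Z)$. Per coordinate this can be established from $\hat{S}_i$ being a deterministic function of $(X^n, S_e^n, Z^n)$ together with the memoryless structure that forces $P(S_{e,i}' \mid X_i, S_{e,i}, Z_i)$ to be independent of the other coordinates. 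Propagating this Markov property through the time-sharing—either by defining $P_{\hat{S} \mid X S_e Z}$ as the conditional marginal of the identified joint and verifying that the distortion and preservation bounds are unaffected, or by invoking convexity of $\mathcal{P}_{NCR}^{DE}$—is the bookkeeping that closes the converse.
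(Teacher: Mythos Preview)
Your proposal is correct and follows essentially the same route as the paper: achievability by setting $U=Y$ in Corollary~\ref{coro: no common randomness inner bound} (using $I(Y;Y)=H(Y)\geq I(Y;\hat S)$ to kill the common-randomness constraint), and converse via Fano's inequality, the independence of $M$ and $S_e^n$, and a time-sharing identification $Y=Y_Q$, $S_e=S_{e,Q}$, etc. Your treatment of the Markov bookkeeping in the converse is actually more careful than the paper's, which simply invokes a ``standard time-sharing argument''; one minor point is that the estimator may be stochastic rather than deterministic, but the Markov chain $\hat S_i-(X_i,S_{e,i},Z_i)-S_{e,i}'$ still follows because $P(S_{e,i}'\mid X_i,S_{e,i},Z_i)$ is fixed by the memoryless channel and source and hence independent of $i$, which is precisely what lets the chain survive the mixture over $Q$.
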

\begin{proof}
    The achievability is proved by setting $U=Y$ in Corollary \ref{coro: no common randomness inner bound}, which is valid by the deterministic main channel property. For this choice we have
    \begin{align*}
        I(U;Y )=I(Y;Y )=H(Y) \geq I(U;\hat{S})=I(Y;\hat{S}).
    \end{align*}   
Hence, the condition $I(U;\hat{S})\leq I(U;Y )$ in $\mathcal{P}_{NCR}$ always holds so that no further common randomness is needed. The converse follows by using the Fanos inequality and a standard time-sharing argument.
\begin{comment}
    For the converse, let $(n,R,R_c)$ be a code that satisfies the conditions in \eqref{def: DP achievable}. By Fano's inequality, we have
    \begin{align*}
        H(M|Y^n) \leq n\epsilon_n.
    \end{align*}
    It follows that 
    \begin{align*}
        nR &= H(M)\\
        &\leq I(M;Y^n) + n\epsilon_n\\
        &\overset{(a)}{=}I(M;Y^n) - I(M;S^n_e) + n\epsilon_n\\
        &\leq I(M;Y^n|S^n_e)+ n\epsilon_n\\
        &\overset{(b)}{\leq} \sum_{i=1}^n H(Y_i|S_{e,i})+ n\epsilon_n\\
        &=n H(Y_Q|S_{e,Q},Q) + n\epsilon_n\\
        &\overset{(c)}{\leq} nH(Y_Q|S_{e,Q})+ n\epsilon_n \overset{(d)}{=} nH(Y|S_{e})+ n\epsilon_n,
    \end{align*}
    where $(a)$ follows by the independence between $M$ and $S_e$, $(b)$ and $(c)$ follow by the fact that conditions do not increase entropy and the entropy is non-negative, $(d)$ follows by setting $Y=Y_Q,S_e=S_{e,Q}$. By further setting $X=X_Q,\hat{S}=\hat{S}_Q =S_{d,Q}$, we still have $P_{YZ|XS}=\mathbb{I}\{Y=y(X,S_e)\}P_{Z|XS_e }$ and $P_{\hat{S}}=P_S$. 
    
    For the distortion, it follows that
    \begin{align*}
        D &\geq \mathbb{E}\left[ d(S^n,\hat{S}^n) \right]\\
        &=\frac{1}{n}\sum_{i=1}^n \mathbb{E}\left[ d(S_i,\hat{S}_i) \right]\\
        &=\mathbb{E}\left[ \mathbb{E}\left[ d(S_Q,\hat{S}_Q) \right] | Q \right]\\
        &=\mathbb{E}\left[ d(S_Q,\hat{S}_Q) \right] = \mathbb{E}\left[d(S,\hat{S}) \right].
    \end{align*}
    The proof is completed.
\end{comment}
\end{proof}

\subsection{Deterministic Encoder}
In this subsection, we consider the case where the encoder is not allowed to randomize. Note that the randomness at the encoder consists of two parts: the likelihood encoder and the distribution $P_{X|US_e}$. When the encoder is not allowed to randomize, the likelihood encoder becomes deterministic so that the distribution $P_{X|US_e}$ becomes a deterministic function. Define a new set $\widetilde{\mathcal{P}}(U,X,\hat{S})$ of distributions $P_{S_eS UXYZ\hat{S}}$ such that
\begin{align*}
    &P_{S_eS UXYZ\hat{S}}(s_e,s ,u,x,y,z,\hat{s})=P_{S_eS }(s_e,s )P_{U}(u)x(u,s_e)\\
    &\quad\quad\quad\quad P_{YZ|XS}(y,z|x,s)P_{\hat{S}|XS_eZ}(\hat{s}|x,s_e,z)
\end{align*}
Further, define $ \mathcal{P}_{DE} = \{P_{S_eS UXYZ\hat{S}}\in\widetilde{\mathcal{P}}(U,X,\hat{S}): P_{S}=P_{\hat{S}}\}.$
\begin{corollary}\label{the: cpd with cr and causal csi}
    An inner bound of the CR-assisted distribution-preserving capacity-distortion region with  a deterministic encoder is
\begin{small}
    \begin{equation*}
    C(D,0) = \left\{
    \begin{aligned}
        (R,R_c)\in\mathbb{R}^2: &R\geq 0,R_c \geq 0,\exists P_{S_eS UXYZ\hat{S}} \in \mathcal{P}_{DE},\\
        &R \leq I(U;Y ),\\
        &R + R_c \geq I(U;\hat{S}),\\
        &D \geq \mathbb{E}[d(S,\hat{S})].
    \end{aligned}
    \right.
\end{equation*} 
\end{small}   
\end{corollary}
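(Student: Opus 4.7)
The plan is to specialize the achievability argument used for Theorem \ref{the: capacity perception distortion with cr and noisy CSI} to the deterministic-encoder setting. Comparing $\mathcal{P}_{DE}$ with $\mathcal{P}$, the structure $P_{U}(u)\,x(u,s_e)$ forces $U\perp S_e$ and removes the $P_{X|US_e}$ randomization, which corresponds exactly to setting $R_p=0$ in the auxiliary region $R'(D,0)$ in \eqref{eq: equivalent region}. Specializing $R_p=0$ there immediately gives $R\leq I(U;Y)$, $R+R_c\geq I(U;\hat S)$, and $I(U;S_e)=0$ (consistent with $U\perp S_e$), which is exactly the claimed region. So the work reduces to constructing a coding scheme without the likelihood-encoder step.

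Concretely, fix $P_{S_eSUXYZ\hat S}\in\mathcal{P}_{DE}$ and choose $R<I(U;Y)$ and $R+R_c>I(U;\hat S)$. I would generate one codeword $u^n(m,\gamma)$ for each $(m,\gamma)\in[1{:}2^{nR}]\times[1{:}2^{nR_c}]$ i.i.d.\ according to $P_U^n$, with no sub-binning. Encoding is deterministic: on observing $(m,\gamma,s_e^n)$, transmit $x_i=x(u_i(m,\gamma),s_{e,i})$. The decoder, which shares $\gamma$, outputs the unique $\hat m$ such that $(u^n(\hat m,\gamma),y^n)\in T^n_{P_{UY},\delta}$. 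The state estimator draws $\hat s^n$ from $\prod_i P_{\hat S|XS_eZ}(\hat s_i\mid x_i,s_{e,i},z_i)$.

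The analysis mirrors the proof of Theorem~\ref{the: capacity perception distortion with cr and noisy CSI}. Define the idealized distribution $Q$ in which $(M,\Gamma)$ is chosen uniformly at random and $(S_e^n,S^n,X^n,Y^n,Z^n,\hat S^n)$ is drawn through the memoryless kernel; since $U\perp S_e$, no likelihood encoder is needed and the induced distribution $\bar Q$ coincides with $Q$ on $(M,\Gamma,S_e^n,S^n,X^n,Y^n,Z^n,\hat S^n)$ in the relevant marginal sense. Applying the soft-covering lemma to $\hat S^n$ with $2^{n(R+R_c)}$ codewords yields $\lVert \bar Q_{\hat S^n}-P_{\hat S}^n\rVert\leq \epsilon_n\to 0$ provided $R+R_c>I(U;\hat S)$. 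The standard packing-lemma/joint-typicality argument (using that $U\perp S_e$, so the effective channel from $U$ to $Y$ has mutual information $I(U;Y)$ without the Gel'fand--Pinsker penalty) gives a vanishing decoding error whenever $R<I(U;Y)$. The distortion bound $\mathbb{E}[d(S^n,\hat S^n)]\leq D+o(1)$ follows from the law of large numbers under $\bar Q$ and the total-variation closeness to $P$.

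The only non-routine step is upgrading $\lVert \bar Q_{\hat S^n}-P_{\hat S}^n\rVert\leq\epsilon_n$ to exact preservation $P_{\hat S^n}=P_S^n$. I would reuse the optimal-transport construction from the proof of Theorem~\ref{the: capacity perception distortion with cr and noisy CSI}: couple $\hat S^n$ to an auxiliary sequence $\tilde S^n\sim P_S^n$ through a minimum-cost coupling achieving the total-variation distance, and replace $\hat S^n$ by $\tilde S^n$ on the event where they disagree; because this event has probability at most $\epsilon_n$ (vanishing exponentially by the soft-covering lemma) and $d$ is bounded on typical arguments, the extra distortion is $o(1)$ and can be absorbed into the slack of $D$. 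This is the step I expect to be the main technical obstacle, since it requires care that the coupling be realizable by the state estimator using only $(\Gamma,X^n,S_e^n,Z^n)$; exactly as in the proof of Theorem~\ref{the: capacity perception distortion with cr and noisy CSI}, this is handled by letting the estimator apply the coupling conditioned on its observations, and then invoking a standard expurgation of the codebook to clear the remaining negligible contribution.
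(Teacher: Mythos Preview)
Your proposal is correct and follows essentially the same approach as the paper: the paper's proof is the single line ``set $R_p=0$ in the region \eqref{eq: equivalent region}, which requires $I(U;S_e)=0$,'' and you have spelled out exactly what this specialization means operationally (no sub-binning, deterministic $x_i=x(u_i,s_{e,i})$, $Q=\bar Q$ because $U\perp S_e$ makes the likelihood-encoder step vacuous, then soft-covering, packing, and the same optimal-transport cleanup). Your concern in the last paragraph about realizability of the coupling is unnecessary: the optimal transport $T_{\hat S'^n|\hat S^n}$ depends only on $\hat S^n$, which the state estimator has already produced, so no expurgation or extra conditioning is needed beyond what Appendix~\ref{app: optimal transport argument} already provides.
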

The proof follows by setting $R_p=0$ in the region \eqref{eq: equivalent region}, which requires $I(U;S_e)=0$. This implies that the auxiliary random variable $U$ is independent of $S_e$ and the encoder uses the channel state information in a causal manner. One can further consider a more strict reconstruction constraint by defining the output distribution constraint as $P_{S^n_e\hat{S}^n}=\prod_{i=1}^n P_{S_e S}.$
Redefine the input distribution set as
\begin{small}
    \begin{align*}
    \mathcal{P}_{DE}' = \{P_{S_eS UXYZ\hat{S}}\in\widetilde{\mathcal{P}}(U,X,\hat{S}): P_{S_eS}=P_{S_e\hat{S}}\}.
\end{align*}
\end{small}
Then we have the following capacity region.
\begin{theorem}{(Capacity)}\label{the: cpd with cr and causal csi and strict constraint}
    For the deterministic encoder, the CR-assisted distribution-preserving capacity-distortion  region with causal CSI is
\begin{small}
    \begin{equation*}
    C(D,0) = \left\{
    \begin{aligned}
        &(R,R_c)\in\mathbb{R}^2: R\geq 0,R_c \geq 0,\exists P_{S_eS UXYZ\hat{S}} \in \mathcal{P}_{DE}',\\
        &R \leq I(U;Y ),\\
        &R + R_c \geq I(U;\hat{S},S_e),\\
        &D \geq \mathbb{E}[d(S,\hat{S})].
    \end{aligned}
    \right.
\end{equation*}    
\end{small}
\end{theorem}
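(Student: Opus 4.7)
\emph{Plan of proof.} The theorem asserts both an inner bound and a matching converse, so I would treat the two directions separately, leveraging the machinery developed for Theorem~\ref{the: capacity perception distortion with cr and noisy CSI} and Corollary~\ref{the: cpd with cr and causal csi} and adapting it to the strengthened joint constraint $P_{S_eS}=P_{S_e\hat{S}}$.

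\emph{Achievability.} The plan is to re-use the likelihood-encoder scheme from the proof of Theorem~\ref{the: capacity perception distortion with cr and noisy CSI} under the specialization $R_p=0$, which forces $I(U;S_e)=0$, makes $U$ independent of $S_e$, and collapses the channel $P_{X|US_e}$ to a deterministic function $X=x(U,S_e)$ as required for a deterministic encoder. The one essential modification is the soft-covering step: instead of covering the marginal kernel $P_{\hat{S}|U}$, I would apply the soft-covering lemma to the joint kernel $P_{S_e\hat{S}|U}$, whose rate requirement is $I(U;\hat{S},S_e)$ (which, using $I(U;S_e)=0$, equals $I(U;\hat{S}\mid S_e)$). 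This forces $R+R_c\geq I(U;\hat{S},S_e)$, while the Gel'fand--Pinsker analysis with $R_p=0$ still yields $R\leq I(U;Y)-I(U;S_e)=I(U;Y)$. The remaining ingredients—the optimal-transport correction that promotes approximate preservation to exact preservation, the decoding-error bound via the triangle inequality on total variation, and the distortion control by typicality—transfer verbatim from the proof of Theorem~\ref{the: capacity perception distortion with cr and noisy CSI}.

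\emph{Converse.} Let $(n,R,R_c)$ be any code satisfying~\eqref{def: DP achievable} with the joint preservation condition $||P_{S_e^n\hat{S}^n}-P_{S_eS}^n||\to 0$. For the communication rate, Fano's inequality combined with the Shannon-strategy identification $U_i=(M,\Gamma,Y^{i-1},S_e^{i-1})$, which is independent of $S_{e,i}$ by the i.i.d.\ source and the independence of $(M,\Gamma)$ from $S_e^n$, yields $nR\leq\sum_i I(U_i;Y_i)+n\epsilon_n$. For the common-randomness bound, I would begin with $n(R+R_c)\geq H(M,\Gamma)\geq I(M,\Gamma;\hat{S}^n,S_e^n)$, invoke the joint preservation and uniform continuity of entropy (Csisz\'ar--K\"orner) to obtain $H(\hat{S}^n,S_e^n)\geq nH(\hat{S},S_e)-o(n)$, and then single-letterize $H(\hat{S}^n,S_e^n\mid M,\Gamma)$ using the same auxiliary $U_i$ augmented with $\hat{S}^{i-1}$. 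A time-sharing random variable produces $R+R_c\geq I(U;\hat{S},S_e)$; the distortion and non-negativity bounds are immediate.

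\emph{Main obstacle.} The delicate point is that the estimator $h$ has \emph{non-causal} access to $S_e^n$, so the past reconstructions $\hat{S}^{i-1}$ a priori depend on $S_{e,i}$, which would break the independence $U_i\perp S_{e,i}$ that both single-letter bounds rely on. My plan to overcome this is to let the joint preservation constraint itself do the work: under $P_{S_e^n\hat{S}^n}\approx P_{S_eS}^n$, the block $(\hat{S}^{i-1},S_e^{i-1})$ becomes approximately independent of $S_{e,i}$, and the residual coupling contributes only a total-variation term that can be absorbed into the $o(n)$ slack produced by entropy continuity. Making this step quantitative, while verifying that a single identification $U_i$ delivers both the $I(U;Y)$ bound and the $I(U;\hat{S},S_e)$ bound consistently, is the main technical hurdle.
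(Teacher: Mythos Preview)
Your achievability plan matches the paper's: set $R_p=0$ (forcing $I(U;S_e)=0$) and apply soft covering to the joint kernel $P_{S_e\hat S|U}$, so that the randomness constraint becomes $R+R_c\geq I(U;\hat S,S_e)$.

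For the converse, your strategy is in the right direction but you over-engineer the key step. The ``main obstacle'' you flag---that $\hat S^{i-1}$ may depend on $S_{e,i}$ through the estimator's non-causal access to $S_e^n$---evaporates once you use the \emph{exact} product structure $P_{S_e^n\hat S^n}=\prod_i P_{S_eS}$ that the theorem assumes: the pairs $(S_{e,i},\hat S_i)$ are then literally i.i.d., hence $(S_e^{i-1},\hat S^{i-1})\perp(S_{e,i},\hat S_i)$ exactly, and no entropy-continuity patch is needed. The paper's chain reads $n(R+R_c)\geq I(M,\Gamma;S_e^n,\hat S^n)=\sum_i I(M,\Gamma;S_{e,i},\hat S_i\mid S_e^{i-1},\hat S^{i-1})=\sum_i I(M,\Gamma,S_e^{i-1},\hat S^{i-1};S_{e,i},\hat S_i)\geq\sum_i I(U_i;S_{e,i},\hat S_i)$, where the second equality is the i.i.d.\ step and the last inequality simply \emph{drops} $\hat S^{i-1}$, leaving $U_i=(M,\Gamma,S_e^{i-1})$. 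This same $U_i$---without $Y^{i-1}$---also serves the rate bound: the paper enlarges to $(M,\Gamma,S_e^{i-1},X^{i-1},Y^{i-1})$, removes $Y^{i-1}$ via a Markov-chain argument, and removes $X^{i-1}$ because it is a deterministic function of $(M,\Gamma,S_e^{i-1})$. Your inclusion of $Y^{i-1}$ in the auxiliary is harmless for the $R$ upper bound but goes the wrong way in the $R+R_c$ lower bound (adding $Y^{i-1}$ to the conditioning decreases $H(\hat S_i,S_{e,i}\mid\cdot)$ and thus inflates the single-letter quantity beyond what $n(R+R_c)$ can be shown to dominate), which is exactly the consistency difficulty you anticipated; the paper's simpler auxiliary avoids it entirely.
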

The achievability proof follows by setting $R_p=0$ and updating the constraints $R + R_c \geq I(U;\hat{S},S_e)$ due to the more strict constraint. To prove the converse, the bound on $R$ follows using the converse proof for the channel with causal CSI\cite[Chapter 7]{el2011network}. The bound on $R+R_c$ follows by the inequality $n(R + R_c) \geq H(M,\Gamma) \geq I(M,\Gamma;S^n_e,\hat{S}^n)$ and the fact that $(S_{e,i},\hat{S}_i),i=1,2,\dots,n$ are i.i.d. random variable pairs. The details are provided in Appendix \ref{app: proof of theorem de and strict constraint}. 

\begin{remark}
    The regions in this subsection are included in the inner bound in Theorem \ref{the: capacity perception distortion with cr and noisy CSI} since when randomization is not allowed at the encoder side, although the reconstructed distribution can be preserved, the communication rate is reduced due to the usage of the channel state information. The result shows that the private randomness at the encoder side helps both the transmission and reconstruction.    
\end{remark}

\subsection{Boundary Points: $C(0,\Delta)$ and $C(D_{min},\Delta)$, $ \Delta \geq 0$}
In general, due to the existence of channel noise, zero distortion is impossible for ISAC problem, unless the state estimator can reproduce the exact state based on the input and feedback. In this case, not only the distortion is minimized, we also achieve perfect distribution preservation. To see this, let $h: \mathcal{X}^n\times\mathcal{Z}^n\to\mathcal{S}^n$ be a reconstruction function that can reproduce the states based on $X^n$ and $Z^n$. It follows that
\begin{small}
    \begin{align*}
    P_{\hat{S}^n}(\hat{s}^n)&=\sum_{m}\frac{1}{|\mathcal{M}|}\sum_{s^n}P^n_{S}(s^n)\sum_{z^n}P_{Z|XS}^n(z^n|x^n(m),s^n)\\
    &\quad\quad\quad\quad\quad\quad\quad \cdot P_{\hat{S}^n|X^nZ^n}(\hat{s}^n|x^n(m),z^n)\\
    &\overset{(a)}{=}\sum_{m}\frac{1}{|\mathcal{M}|}P^n_S(\hat{s}^n)\sum_{z^n}P_{Z|XS}^n(z^n|x^n(m),\hat{s}^n)\\
    &\quad\quad\quad\quad\quad\quad\quad \cdot\mathbb{I}\{\hat{s}^n=h(x^n(m),z^n)\}\\
    &=P^n_S(\hat{s}^n)=\prod_{i=1}^{n}P_S(\hat{s}_i),
\end{align*}
\end{small}
where $(a)$ follows by the fact that $\mathbb{I}\{\hat{s}^n=h(x^n(m),z^n)\}=0$ when $s^n\neq\hat{s}^n$. However, in general, a low distortion does not necessarily imply a well-preserved reconstructed distribution. Consider the following example in which the channel model is borrowed from \cite{ahmadipour2022information}.
\begin{example}{Binary Channel with Multiplicative Bernoulli Source:}
    Consider a channel $Y=SX$ with binary alphabets $\mathcal{X}=\mathcal{S}=\mathcal{Y}=\{0,1\}$ where the state $S \sim Bernoulli(q),q=\frac{1}{4}$. The feedback is perfect $Z=Y$ and the distortion measure is Hamming distortion $d(s,\hat{s})=s \oplus \hat{s}$.
We first show that the set $\mathcal{P}(D,0)$ is not empty for some $D>0$. Select the input distribution $X\sim Bernoulli(\frac{3}{4})$ and the reconstruction distribution $P_{\hat{S}|XZ}$ such that the transition matrix satisfies
    \begin{gather}
        \begin{bmatrix}\label{def: example reconstruction matrix}
            q & 1-q \\
            a & 1-a \\
            \frac{2q}{3(1-q)} & 1-\frac{2q}{3(1-q)} \\
            \frac{1}{3} & \frac{2}{3}
        \end{bmatrix}
    \end{gather}
with arbitrary $a\in[0,1]$. It follows that with the selection of input and reconstruction distributions, the underlying distribution of $\hat{S}$ satisfies $P_{\hat{S}}=P_{S}$. With the help of common randomness, one can always preserve the state distribution for the reconstruction. 
However, this reconstruction distribution is not the best estimator to minimize the distortion. As shown in \cite{ahmadipour2022information}, the best estimator is $\hat{s}^*(x,z)=z$ if $x=1$ and $\hat{s}^*(x,z)=0$ if $x=0$.
\begin{comment}
\begin{align*}
    \hat{s}^*(x,z)=\left\{
        \begin{aligned}
            z, \;\;\;\;\;\text{if $x=1$,}\\
            0, \;\;\;\;\;\text{if $x=0$.}
        \end{aligned}
    \right.
\end{align*}
\end{comment}
The underlying distribution of $\hat{S}$ in this case is
\begin{small}
    \begin{align*}
    P_{\hat{S}}(0)=P_X(0) + P_{X}(1)P_S(0),
\end{align*}
\end{small}
which is equal to $P_{S}$ when we choose input distribution such that $P_X(1)=1$. Then we have $Z=Y=XS=S$ and by the definition of $\hat{s}^*(x,z)$ the estimator always outputs $z=s$. However, the communication rate is 0 in this case.

Using the distribution $P_{\hat{S}|XZ}$ with the transition matrix \eqref{def: example reconstruction matrix} and Theorem \ref{the: capacity perception distortion with cr and noisy CSI} with $U=X,S_e=\emptyset$, we achieve points $(R,R_c)$ such that $P_{\hat{S}^n}=P_S^n$. Of course, in this case, since we are not using the best estimator in the sense of distortion, and hence the distortion $D$ is larger than $D_{min}$ that is achieved by $\hat{s}^*(x,z)$. However, with the input distribution $X\sim Bernoulli(\frac{3}{4})$, we also achieve a positive communication rate.
\end{example}

\section{Gaussian Example}
In this subsection, we consider the additive Gaussian channel
\begin{align*}
    Y=X+S+N,
\end{align*}
where $S\sim \mathcal{N}(0,\sigma_S^2),N\sim \mathcal{N}(0,\sigma_N)$ are independent Gaussian random variables with $\sigma_S^2=2,\sigma_N^2=1.$ The input constraint is $\frac{1}{n}\sum_{i=1}^n \mathbb{E}[X^2]\leq 2.$ The feedback signal is perfect feedback $Z=Y$ and the covariance matrix of the state $S$ and side information $S_e$ at the encoder side satisfies
\begin{align*}
    \left[
    \begin{matrix}
        \sigma_S^2 & \rho \sigma_S \sigma_{S_e}\\
        \rho \sigma_S \sigma_{S_e} & \sigma_{S_e}^2
    \end{matrix}
    \right]
\end{align*}
where $\sigma^2_{S_e}=3$ and $\rho\in[-1,1]$ is the correlation between $S$ and $S_e$.

We choose the auxiliary random variable $U=X+\alpha S_e$ for some $\alpha \in \mathbb{R}^+$ and the estimator 
\begin{align*}
    \hat{S} = a(X+\alpha S_e +Z) + W 
\end{align*}
where $W\sim \mathcal{N}(0,\frac{1}{2})$ is an independent Gaussian random variable. 

Now we can rewrite Theorem \ref{the: capacity perception distortion with cr and noisy CSI} as \eqref{region: gaussian example} at the bottom of the current page.
\begin{figure*}[b]
    \par\noindent\rule{\textwidth}{0.5pt}
    \begin{align}\label{region: gaussian example}
        R(D,0)=\left\{  
            \begin{aligned}
                &(\alpha,\rho)\in \mathcal{P} \\
                &R \leq \frac{1}{2}\log \frac{\sigma_X^2\sigma_Y^2}{(\sigma_X^2 + \alpha^2 \sigma_{S_e}^2)\sigma_Y^2-(\sigma_X^2+\alpha \rho \sigma_S\sigma_{S_e})^2},\\ 
                &R_c \geq \frac{1}{2}\log \frac{\sigma_S^2 \left( (\sigma_X^2 + \alpha^2 \sigma_{S_e}^2)\sigma_Y^2-(\sigma_X^2+\alpha \rho \sigma_S\sigma_{S_e})^2 \right) }{\sigma_Y^2 \left( (\sigma_X^2+\alpha^2\sigma_{S_e})\sigma_S^2 - (2a\sigma_X^2+a\alpha^2\sigma_{S_e}^2+a\alpha \rho \sigma_S\sigma_{S_e})^2 \right)}
            \end{aligned}
        \right.
    \end{align}
    where
    \begin{align*}
        &\mathcal{P}=\{(\alpha,\rho)\in\mathbb{R}^+ \times [-1,1]:\\
        &\quad\quad\quad\quad\quad\quad\quad 4a^2\sigma_X^2 + a^2\alpha^2\sigma_{S_e}^2 +(a^2-1)\sigma_S^2+a^2\sigma_N^2+2\alpha a^2(\rho\sigma_S\sigma_{S_e})+\sigma_W^2 =0,\\
        &\quad\quad\quad\quad\quad\quad\quad (2-2a)\sigma_S^2-2\alpha a^2 (\rho \sigma_S\sigma_{S_e}) \leq D            \}
    \end{align*}
\end{figure*}
\begin{figure}
    \centering
    \includegraphics[scale=0.5]{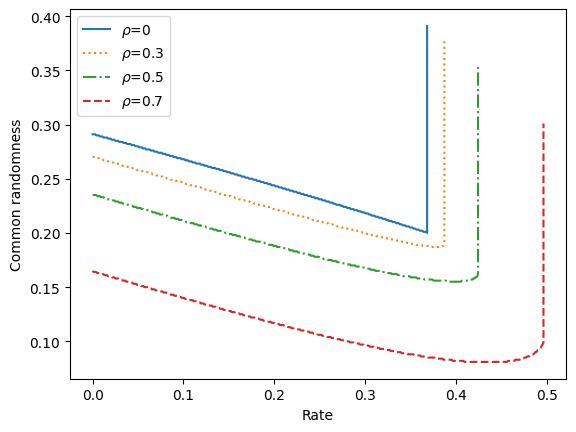}
    \caption{Rate-Common randomness region with different correlation values under distortion constraint $D=3$}
    \label{fig: gaussian example}
\end{figure}
We provide in Fig. \ref{fig: gaussian example} the rate-common randomness region with different correlation values of $S$ and $S_e$ under distortion constraint $D=3$. The figure shows that a higher correlation value $\rho$ implies a higher communication rate and a lower common randomness rate, which is reasonable since a high correlation implies the side information $S_e$ at the sender side is highly related to the channel state $S$, and hence is helpful in both communication and state estimation. Also note that when $\rho=0$, the Gaussian random variables $S_e$ and $S$ are independent of each other, and the sender can not benefit from observing the side information. The right end of the solid line is the case when $\alpha=0$ (which is not the case with different values of $\rho$). The sender does not use the side information in both the communication and estimation which is implied by Fig. \ref{fig: gaussian example} to be optimal in this case since the correlation $\rho=0$.

\section{Secure ISAC}

%In this section, we consider the distribution-preservation ISAC with private reconstruction requirement. We start with the reconstructed sequence masking case, and then the case that the receiver also tries to estimate the channel state based on the channel output. In this section, the receiver is called a public receiver, which means the received signal may be received by any possible third party. The goal of this section is to build theorems for the case that the output sequences at the public receiver side give limited information about the reconstructed sequences at the state estimator side. 

In this section, we consider the distribution-preservation ISAC with private reconstruction requirement. Suppose there exists a henchman at the state estimator side and an eavesdropper who both observe the channel output $Y^n$. However, the eavesdropper does not have access to the common randomness. The henchman observes the reconstructed sequence and the sequences at the encoder side, it transmits a rate-limited message to the eavesdropper to help the eavesdropper reconstruct the state sequence.
\begin{comment}
\subsection{Reconstructed Sequence Masking}

In this subsection, we consider a weaker public receiver, who can only infer information about the reconstructed sequence from the output sequence received. We evaluate the information leakage using mutual information between the channel output and the reconstructed sequence.
\end{comment}

There is an additional distortion constraint that the distortion between reconstructed sequences at the state estimator side and the receiver side is lower bounded by a given number $D_E$. Let $\mathcal{C}$ be the communication codebook used between the sender and the receiver. The behavior of the henchman and the eavesdropper is defined as follows.

\begin{definition}
    A henchman in an ISAC system observes the behavior of the encoder. Its operation consists of 
    \begin{itemize}
        \item a message set $\mathcal{M}_E=[1:2^{nR_E}]$,
        \item an encoder $f_E: \hat{\mathcal{S}}^n \times \mathcal{Y}^n \times \Gamma\times \mathcal{C} \to \mathcal{M}_E$,
        \item a decoder $g_E: \mathcal{Y}^n\times\mathcal{M}_E\times \mathcal{C} \to \hat{\mathcal{S}}_E^n$. 
    \end{itemize}

\end{definition}
\begin{comment}
\begin{definition}
    A correlated randomness (CR)-assisted code $(n,R,R_c)$ for distribution preserving ISAC with noisy channel state information (NCSI) and public receiver consists of
    \begin{itemize}
        \item A message set $\mathcal{M}=[1:2^{nR}]$,
        \item Common randomness available both at the encoder and decoder $\Gamma\in[1:2^{nR_c}]$,
        \item An encoder $f:\mathcal{M}\times\mathcal{S}^n_e \times \Gamma \to \mathcal{X}^n$,
        \item A decoder $g:\mathcal{Y}^n\times\mathcal{S}^n_d \times \Gamma \to \mathcal{M}$,
        \item A state estimator $h_1:\mathcal{X}^n\times\mathcal{S}^n_e\times\mathcal{Z}^n\to\hat{\mathcal{S}}^n$,
        \item A public receiver estimator $h_2:\mathcal{M}\times\Gamma \times\mathcal{Y}^n\to\widetilde{\mathcal{S}}^n$.
    \end{itemize}
\end{definition}
\end{comment}
\begin{definition}
    Given real numbers $D\geq 0, D_E \geq 0, P\geq 0$, a CR-assisted code is $(D,D_E,P)-$ achievable if for any $\epsilon>0$ and sufficiently large $n$, there exists an $(n,R,R_c)$ code such that
    \begin{small}
        \begin{align*}
        &Pr\{M\neq \hat{M}\} \leq \epsilon, \;|| P_{\hat{S}^n} - P_S^n || \leq \Delta\\
        &\mathbb{E}\left[ d(S^n,\hat{S}^n) \right] \leq D,\;\mathbb{E}\left[ d(\hat{S}^n,\hat{S}^n_E) \right] \geq D_E.
    \end{align*}
    \end{small}
\end{definition}

\begin{comment}
The following auxiliary lemma describes the best estimator for the receiver when the state is reconstructed at the state estimator side in an i.i.d. way.
\begin{lemma}\label{lem: best estimator at receiver side}
    If the state is reconstructed at the state estimator side in an i.i.d. way, the best estimator at the receiver side is also a pointwise estimator such that
    $h_2^*(y^n,m,i,\gamma)=(h_2^*(y_1,u_1(m,i,\gamma)),$ $h_2^*(y_2,u_2(m,i,\gamma)),\dots,h_2^*(y_n,u_n(m,i,\gamma)))$, where $h_2^*(y_j,u_j(m,i,\gamma))$ is defined by
\begin{align*}
    h_2^*(y_j,u_j(m,i,\gamma))=\mathop{\arg\min}_{\widetilde{s}} \sum_{\hat{s}} P_{\hat{S}|YU}(\hat{s}|y_j,u_j(m,i,\gamma))d(\hat{s},\widetilde{s}).
\end{align*}
\end{lemma}
The proof follows similarly to the argument in \cite[Lemma 1]{ahmadipour2022information} and we omit the details. It is proved in Appendix \ref{app: proof of theorem de and strict constraint} that the region $C(D,\epsilon_n)$ can be achieved for some $\epsilon_n\to 0$ as $n\to \infty$ using an i.i.d. reconstruction at the state estimator side. By Lemma \ref{lem: best estimator at receiver side}, we show that the region $C(D,D_E,\epsilon_n)$ can also be achieved using the i.i.d. reconstruction at the state estimator side and the pointwise receiver estimator $h_2^*(y_j,u_j(m,i,\gamma))$.
\end{comment}
The main result for this section is as follows.

\begin{theorem}{(Inner bound)}\label{the: cpddr with cr and noisy CSI and CSI}
    An inner bound of the secure CR-assisted distribution-preserving capacity-distortion  region  is
\begin{small}
    \begin{equation*}
    \begin{split}
        &C(D,D_E,0) = \\
    &\left\{
    \begin{aligned}
        &(R, R_c)\in\mathbb{R}^2: R\geq 0, R_c \geq 0, \exists P_{S_eS UXYZ\hat{S}} \in \mathcal{P},\\
        &R \leq I(U;Y )-I(U;S_e),\\
        &R_c \geq \max\{I(U;\hat{S}) - I(U;Y),0\},\\
        &D \geq \mathbb{E}[d(S,\hat{S})],\\
        &D_E \leq  \left\{
            \begin{aligned}
                &D(R_E,P_{\hat{S}Y})\quad\quad\quad\quad\quad\quad\quad\quad\quad\quad\;\;\;\;\;\;\; \text{if $R_E<R_c,$}\\
                &\min\{D(R_E,P_{\hat{S}Y}), D(R_E-R_c,P_{U\hat{S}Y})\}\quad\text{if $R_E\geq R_c.$}
            \end{aligned}
        \right.
    \end{aligned}
    \right.
    \end{split}
\end{equation*}
\end{small}
where $D(r,P_{U\hat{S}Y})$ is the rate-distortion function such that $r$ is the compression rate and $P_{U\hat{S}Y}$ is the joint distribution such that $\hat{S}$ is the source and $U$ and $Y$ are side information available at both sides. We set $D(r,P_{U\hat{S}Y})=\infty$ if $r<0$.
\end{theorem}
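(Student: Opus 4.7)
The plan is to build directly on the coding scheme of Theorem~\ref{the: capacity perception distortion with cr and noisy CSI}: I keep the same codebook $\{u^n(m,i,\gamma)\}$, likelihood encoder, main-channel decoder and state estimator, so that the constraints on $R$, $R_c$, on the reconstruction distortion $D$, and on the perfect distribution preservation are inherited verbatim. The new content is the secrecy-distortion lower bound $D_E$, for which I must show that \emph{no} henchman strategy of rate $R_E$ combined with the eavesdropper's side information $Y^n$ can reconstruct $\hat{S}^n$ with distortion smaller than the right-hand side of the statement.

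First I would analyze the ``direct compression'' strategy. The henchman observes $(\hat{S}^n, Y^n, \Gamma, \mathcal{C})$ and emits $M_E \in [1:2^{nR_E}]$; the eavesdropper observes $(Y^n, M_E, \mathcal{C})$ and produces $\hat{S}^n_E$. Under the induced distribution $\bar Q$ from Theorem~\ref{the: capacity perception distortion with cr and noisy CSI}, the soft-covering lemma gives $\|\bar Q - P\|\to 0$, so the pairs $(\hat S_i, Y_i)$ are asymptotically i.i.d.\ $\sim P_{\hat S Y}$. The henchman--eavesdropper problem then becomes, up to a vanishing slack, the standard lossy source coding of $\hat S^n$ with side information $Y^n$ common to both terminals, whose converse yields
\[
\liminf_{n\to\infty} \mathbb{E}_{\bar Q}\bigl[d(\hat S^n, \hat S^n_E)\bigr] \;\geq\; D(R_E, P_{\hat S Y}).
\]
Hence any $D_E \leq D(R_E, P_{\hat S Y})$ can be guaranteed as a secrecy lower bound. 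The total-variation gap between $\bar Q$ and $P$ enters only as an $O(\epsilon\, d_{\max})$ term on bounded distortions, and the optimal-transport construction already introduced in the proof of Theorem~\ref{the: capacity perception distortion with cr and noisy CSI} handles perfect preservation with negligible extra distortion.

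Next I would treat the alternate strategy that becomes available when $R_E \geq R_c$: the henchman spends $R_c$ bits losslessly describing $\Gamma$; the eavesdropper, now holding $(Y^n, \Gamma, \mathcal{C})$, runs the same joint-typicality decoder as the legitimate receiver and recovers $(\hat M, \hat I)$ with vanishing error probability, which gives it $U^n = u^n(\hat M, \hat I, \Gamma)$. The remaining $R_E - R_c$ bits are then used for conditional lossy compression of $\hat S^n$ with side information $(U^n, Y^n)$ at both terminals; since $(\hat S_i, U_i, Y_i)$ are asymptotically i.i.d.\ $\sim P_{U\hat S Y}$ under $\bar Q$, the conditional-rate-distortion converse yields a minimum distortion of $D(R_E - R_c, P_{U\hat S Y})$. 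Because the henchman is free to pick whichever of the two schemes hurts the legitimate system more, the eavesdropper's achievable distortion is the minimum of the two rate--distortion values, producing the $\min\{\cdot,\cdot\}$ clause.

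The main obstacle is making this ``converse inside achievability'' argument rigorous and uniform over all henchman--eavesdropper pairs. Concretely, I must (i) port the standard single-letter converse of conditional rate-distortion coding from the i.i.d.\ product model $P$ to the operational induced distribution $\bar Q$, using $\|\bar Q - P\|\leq \epsilon$ together with boundedness (or truncation) of $d$, and (ii) show that the event on which the eavesdropper fails to decode $(\hat M, \hat I)$ in the second strategy contributes only vanishingly to $\mathbb E[d(\hat S^n, \hat S^n_E)]$. Both steps reduce to the same quantitative tools already used in Appendix~\ref{app: proof of thm noisy csi}, so once the reduction to conditional rate-distortion is made precise the remainder is routine.
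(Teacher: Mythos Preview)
Your proposal has a genuine logical gap in the secrecy-distortion analysis. The theorem requires a \emph{uniform lower bound} on the eavesdropper's distortion over \emph{all} henchman strategies, but your argument does not establish this.

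Concretely: your second paragraph claims that the rate--distortion converse yields $\mathbb{E}[d(\hat S^n,\hat S^n_E)]\geq D(R_E,P_{\hat S Y})$ for every henchman--eavesdropper pair. This is false in general, and you yourself provide the counterexample in the third paragraph: when $R_E\geq R_c$, the ``describe $\Gamma$ first'' strategy lets the eavesdropper recover $U^n$ and then achieve distortion $D(R_E-R_c,P_{U\hat S Y})$, which can be strictly smaller than $D(R_E,P_{\hat S Y})$. The standard conditional rate--distortion converse does \emph{not} apply here because the henchman observes $(\Gamma,\mathcal{C})$ in addition to $(\hat S^n,Y^n)$, and this extra structure is correlated with $\hat S^n$ through the code. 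Your third paragraph then goes in the opposite direction: it exhibits one specific henchman strategy and computes what distortion it \emph{achieves}. That is an \emph{upper} bound on the eavesdropper's optimal distortion, i.e., a converse for the legitimate system, not the achievability you need. Taking the minimum over two candidate strategies tells you only what the henchman can do with those two strategies; it says nothing about whether some other strategy (e.g., describing a partial function of $\Gamma$, or exploiting the codebook structure differently) could do strictly better.

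The paper's proof is structurally different: it follows the Schieler--Cuff henchman technique and works \emph{over the random codebook}. One shows that, with high probability over $\mathbf C$, for \emph{every} choice of $(f_E,\mathcal C_V)$ simultaneously, $\Pr\{d(\hat S^n,\hat s^n_E(M_E,Y^n))\leq D\}$ is small. This is done by a union bound: for each fixed $y^n$ the eavesdropper can output at most $2^{nR_E}$ distinct reconstructions, and one counts how many codewords $U^n(m,i,\gamma)$ are jointly typical with $y^n$ (roughly $2^{nR_c}$ once $R+R_p\approx I(U;Y)$). Combining these two exponents with a single-letter ball-covering bound yields exactly the two conditions $R_E\leq R(D,P_{\hat S Y})$ and $R_E\leq R(D,P_{U\hat S Y})+R_c$. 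This random-code / union-bound step is the missing idea in your plan; without it you cannot rule out henchman strategies beyond the two you enumerated.
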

The eavesdropper observes the channel output $Y^n$ but does not have access to the common randomness. When $R_E > R_c$, the henchman can always use $R_c$ bits to describe the common randomness, and then, with high probability, the eavesdropper can decode the message and obtain the codeword $U^n$. In this case, distortion $D(R_E-R_c,P_{U\hat{S}Y})$, which is the distortion-rate function under joint distribution $P_{{U\hat{S}Y}}$ with $U$ and $Y$ being common side information, it is achievable to the eavesdropper. When $R_E<R_c$, the eavesdropper ignores the codeword, and uses a lossy compression code with the rate $R_E$ and $Y$ being the common side information at both sides.

\emph{Sketch of the proof: } To prove the achievability, the sender still uses the random binning coding provided in Theorem \ref{the: capacity perception distortion with cr and noisy CSI}. For the henchman-eavesdropper, the system is equivalent to the case that a codeword $U^n(M,I,\Gamma)$ is selected uniformly at random and produces two sequences, $\hat{S}^n$ and $Y^n$, through a discrete memoryless channel $P^n_{Y\hat{S}|U}$. The henchman observes $(\hat{S}^n,Y^n)$ and describes $\hat{S}^n$ with a limited rate, and the eavesdropper decoder observes $Y^n$. The system is an extended version of the model in \cite{schieler2016henchman} with additional side information at both sides. The optimal strategy for the henchman-eavesdropper is to use a deterministic decoder $s^n_E(M_E,Y^n)$ and an encoder $f_E$ that maps each pair of ($\hat{S}^n,Y^n)$ to an interim codeword $V^n\in \mathcal{C}_V$ that minimizes the expectation of the distortion and set $\hat{S}^n_E=V^n$. To show the distortion achieved at the eavesdropper, it is sufficient to prove $Pr_{\textbf{C}}\left\{\max_{f_E,\mathcal{C}_V}  Pr\{d(\hat{S}^n,\hat{s}^n_E(M_E,Y^n)) \leq D\} > \tau_n \right\}\to 0$ for sufficiently large $n$ and small $\tau_n.$ The subscript $\textbf{C}$ represents the random codebook used by the sender and legitimate receiver. Let $m_E(v^n)$ be the encoding result when $v^n\in\mathcal{C}_V$ is selected. It can be shown that
\begin{small}
    \begin{align*}
    &\max_{f_E,\mathcal{C}_V}  Pr\{d(\hat{S}^n,\hat{s}^n_E(M_E,Y^n))\leq D\}\\
    &\leq 2^{nR_E}\max_{v^n\in\mathcal{V}^n}Pr\{d(\hat{S}^n,\hat{s}^n_E(m_E(v^n),Y^n))\leq D\}
\end{align*}
\end{small}
by considering the `worst case' that the best $v^n$ for the henchman-eavesdropper is in the codebook and the factor $2^{nR_E}$ is the result of the union bound and the fact that given side information $Y^n$ there are at most $2^{nR_E}$ possible decoding results for the eavesdropper decoder.
The problem reduces to determining the compression rate given fixed $v^n\in\mathcal{V}^n$ and distortion constraint $D$ when side information $Y^n$ is available at the decoder side.
The details are provided in Appendix \ref{app: proof of henchman theorem}.

The following capacity-distortion region result can be obtained when the main channel is deterministic.
\begin{corollary}{(Capacity region)}\label{coro: deterministic cpddr with cr and noisy CSI and CSI}
    The CR-assisted secure distribution-preserving capacity-distortion region of the deterministic channel  is
\begin{small}
    \begin{equation*}
    C(D,D_E,0) = \left\{
    \begin{aligned}
        (R, 0)\in\mathbb{R}^2: &R\geq 0, \exists P_{S_eS UXYZ\hat{S}} \in \mathcal{P}_{DE},\\
        &R \leq H(Y|S_e),\\
        &D \geq \mathbb{E}[d(S,\hat{S})],\\
        &D_E \leq D(R_E,P_{Y\hat{S}}).
    \end{aligned}
    \right.
\end{equation*}
\end{small}
\end{corollary}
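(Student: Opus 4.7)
The plan is to prove the corollary by (i) specializing the secure ISAC inner bound of Theorem~\ref{the: cpddr with cr and noisy CSI and CSI} with the auxiliary choice $U=Y$, and (ii) combining the Fano-plus-time-sharing converse used in Corollary~\ref{coro: no common randomness capacity} with a standard lossy source-coding converse for the henchman-eavesdropper pair.

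For achievability, the deterministic main channel $Y=y(X,S_e)$ makes the substitution $U=Y$ admissible in Theorem~\ref{the: cpddr with cr and noisy CSI and CSI} while preserving the Markov structure defining $\mathcal{P}$. Under this choice, $I(U;Y)-I(U;S_e)=H(Y)-I(Y;S_e)=H(Y|S_e)$, matching the rate bound, and $I(U;\hat{S})=I(Y;\hat{S})\leq H(Y)=I(U;Y)$, so the common-randomness requirement $\max\{I(U;\hat{S})-I(U;Y),0\}$ vanishes and $R_c=0$ suffices. For the eavesdropper constraint, $R_c=0$ places us entirely in the regime $R_E\geq R_c$, and with $U=Y$ the side-information pair $(U,Y)$ collapses to $Y$; consequently $D(R_E-R_c,P_{U\hat{S}Y})=D(R_E,P_{\hat{S}Y})$ and the $\min$ reduces to the single term in the statement. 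The distortion and perception constraints are inherited directly.

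For the converse, the rate bound is obtained as in Corollary~\ref{coro: no common randomness capacity}: by Fano's inequality and the independence of $M$ and $S_e^n$,
\begin{align*}
nR &\leq I(M;Y^n)+n\epsilon_n \\
   &\leq I(M;Y^n \mid S_e^n)+n\epsilon_n \\
   &\leq \sum_{i=1}^n H(Y_i\mid S_{e,i})+n\epsilon_n,
\end{align*}
which single-letterizes via a time-sharing random variable to $R\leq H(Y|S_e)$, and the distortion and perception constraints single-letterize in the standard way. For the eavesdropper bound, I argue that the henchman can always implement the classical lossy source-coding scheme for $\hat{S}^n$ with $Y^n$ as common side information at rate $R_E$, attaining average distortion $D(R_E,P_{Y\hat{S}})+o(1)$; hence every achievable tuple must satisfy $D_E\leq D(R_E,P_{Y\hat{S}})$.

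The main (mild) obstacle is ensuring that the single-letter joint distribution $P_{Y\hat{S}}$ appearing in the eavesdropper converse matches the one produced by the achievability scheme; this is automatic because the reconstruction is performed in an i.i.d.\ fashion as in the proof of Theorem~\ref{the: capacity perception distortion with cr and noisy CSI}, so that $D(R_E,P_{Y\hat{S}})$ is indeed the attainable henchman distortion under the chosen triple $(P_U,x(u,s_e),P_{\hat{S}\mid XS_eZ})$, closing the converse.
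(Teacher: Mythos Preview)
Your achievability argument is correct and matches the paper exactly: setting $U=Y$ in Theorem~\ref{the: cpddr with cr and noisy CSI and CSI} gives $R\le H(Y|S_e)$, forces $R_c=0$, and collapses the $\min$ in the eavesdropper constraint to the single term $D(R_E,P_{\hat{S}Y})$.

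There is, however, a genuine gap in your converse for the eavesdropper distortion. In a converse you must handle \emph{any} code, and for an arbitrary $(n,R,R_c)$ code the reconstruction $\hat{S}^n$ is neither i.i.d.\ nor even marginally stationary; your sentence ``the henchman can always implement the classical lossy source-coding scheme for $\hat{S}^n$ with $Y^n$ as common side information at rate $R_E$'' therefore does not apply, since the operational rate--distortion theorem you are invoking presupposes an i.i.d.\ source with a fixed single-letter distribution $P_{\hat{S}Y}$. Your final paragraph then compounds the problem: you justify the converse by appealing to the i.i.d.\ structure of the \emph{achievability} scheme, which is circular --- the converse cannot assume that the code under consideration is the one you built.

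The paper closes this gap with a type-covering argument (the extended type covering lemma of \cite[Lemma~4]{schieler2016henchman}): for every joint type $T$ of $(\hat{S}^n,Y^n)$ there is a side-information codebook of rate $R_E$ achieving distortion $D(R_E,T)+\tau$, so the henchman first spends $o(n)$ bits describing the empirical joint type and then uses the corresponding codebook. This yields $\mathbb{E}[d(\hat{S}^n,\hat{S}_E^n)]\le \mathbb{E}[D(R_E,T_{\hat{S}^nY^n})]+\tau$. The remaining single-letterization step is Jensen's inequality: one must verify that $D(R_E,P_{\hat{S}Y})$ is \emph{concave} in $P_{\hat{S}Y}$ (the paper proves this explicitly via the concavity of $I(\hat{S};\hat{S}_E|Y)$ in $P_{\hat{S}Y}$), and that $\mathbb{E}[T_{\hat{S}^nY^n}]=P_{\hat{S}_QY_Q}$, where $Q$ is the time-sharing variable already used in the rate converse. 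Both of these ingredients are missing from your outline and are not ``automatic.''
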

The achievability proof follows by setting $U=Y$ in Theorem \ref{the: cpddr with cr and noisy CSI and CSI}, which is valid by the deterministic property of the main channel. For the converse, the bound of $R$ and $D$ remain the same as in Corollary \ref{coro: no common randomness capacity}. To show the bound on $D_E$, we use an extended version of the type covering lemma (\cite[Lemma 9.1]{csiszar2011information}), which is presented in \cite[Lemma 4]{schieler2016henchman}. For each joint type of $(\hat{S}^n,Y^n)$ there exists a code by the type covering lemma, and the number of types is $(n+1)^{|\mathcal{S}||\mathcal{Y}|}$, which can be compressed with a negligible rate. The details are provided in Appendix \ref{app: converse proof of corollary henchman deterministic channel}.

\newpage

\bibliographystyle{ieeetr} 
\bibliography{ref}

\clearpage
\appendices
\section{coding scheme for theorem \ref{the: capacity perception distortion with cr and noisy CSI}}\label{app: proof of thm noisy csi}
In this section, we prove the achievability of Theorem \ref{the: capacity perception distortion with cr and noisy CSI}.

Before giving the decoding error and distortion analysis, we first state two properties of total variation as follows.

\begin{lemma}{\cite[Property 1]{song2016likelihood}}\label{lem: property of tv}
    Given alphabet $\mathcal{X}$ and distributions $P$ and $Q$ on $\mathcal{X}$, the following statements hold:
    \begin{enumerate}[label=(\alph*)]
        \item   Let $\epsilon>0$ and $f(x)$ be a function in a bounded range with width $b\in\mathbb{R}^+$. It follows that
        \begin{align*}
            ||P-Q|| < \epsilon \Rightarrow |\mathbb{E}_P\left[ f(X)  \right] - \mathbb{E}_Q\left[ f(X)  \right]| \leq b \epsilon.
        \end{align*} \label{def: tv property 1}
        \item \label{def: tv property 2} Let $P_XP_{Y|X}$ and $Q_XP_{Y|X}$ be joint distributions on $\mathcal{X}\times\mathcal{Y}$. It follows that
        \begin{align*}
            ||P_XP_{Y|X} - Q_XP_{Y|X}|| = ||P_X - Q_X||.
        \end{align*}
    \end{enumerate}
\end{lemma}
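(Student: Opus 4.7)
The plan is to prove the two parts of Lemma~\ref{lem: property of tv} directly from the definition $\|P - Q\| = \frac{1}{2}\sum_{x\in\mathcal{X}} |P(x) - Q(x)|$ used throughout the paper (this is the normalization consistent with the soft-covering applications invoked in the achievability sketch). Both statements are classical textbook facts about total variation, so the work is essentially book-keeping rather than any serious argument.

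For part~\ref{def: tv property 1}, the key idea is a centering trick. Because $P$ and $Q$ are both probability distributions, $\sum_{x}(P(x) - Q(x)) = 0$, so subtracting any constant from $f$ leaves $\mathbb{E}_{P}[f(X)] - \mathbb{E}_{Q}[f(X)]$ unchanged. I would pick $c = (\max_{x} f(x) + \min_{x} f(x))/2$, so that $g(x) := f(x) - c$ satisfies $|g(x)| \leq b/2$ pointwise. Then
\begin{align*}
    \bigl|\mathbb{E}_{P}[f(X)] - \mathbb{E}_{Q}[f(X)]\bigr|
    &= \Bigl|\sum_{x} g(x)\bigl(P(x) - Q(x)\bigr)\Bigr| \\
    &\leq \frac{b}{2}\sum_{x} |P(x) - Q(x)| \\
    &= b\,\|P - Q\| < b\epsilon,
\end{align*}
which is the desired inequality.

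For part~\ref{def: tv property 2}, I would factor the common kernel $P_{Y|X}$ out of the inner sum:
\begin{align*}
    \|P_{X}P_{Y|X} - Q_{X}P_{Y|X}\|
    &= \tfrac{1}{2}\sum_{x,y} P_{Y|X}(y|x)\,|P_{X}(x) - Q_{X}(x)| \\
    &= \tfrac{1}{2}\sum_{x} |P_{X}(x) - Q_{X}(x)|\sum_{y} P_{Y|X}(y|x) \\
    &= \|P_{X} - Q_{X}\|,
\end{align*}
using only $\sum_{y} P_{Y|X}(y|x) = 1$.

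There is no substantive obstacle, and no appeal to any deeper fact is needed beyond summability of the conditional kernel and the identity $\sum_{x}(P(x) - Q(x)) = 0$. The only detail worth watching is the total-variation normalization: with the unnormalized convention $\sum_{x} |P(x) - Q(x)|$ the constant $b$ in part~\ref{def: tv property 1} would instead read $b/2$, so I would fix the $\tfrac{1}{2}$-normalized convention at the outset to keep the constants matched to the rest of the paper.
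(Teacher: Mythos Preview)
Your proof is correct and entirely standard. The paper itself does not prove this lemma: it simply imports it as \cite[Property 1]{song2016likelihood} and uses both parts as black boxes in the achievability analysis, so there is no in-paper argument to compare against. Your choice of the $\tfrac{1}{2}$-normalized total variation is the right one for matching the constant $b$ in part~\ref{def: tv property 1} as stated.
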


We first analyze the decoding error induced by the distribution $Q$, then by the property of the total variation lemma \ref{lem: property of tv}\ref{def: tv property 1}, the error probability induced by another close distribution $\bar{Q}$ can also be bounded. Define the following error events and suppose $(M,I,\Gamma)=(1,1,\gamma)$
\begin{align*}
    &\mathcal{E}_1=\{(U^n(1,1,\gamma),Y^n)\notin T^n_{P_{U Y},\delta}\},\\
    &\mathcal{E}_2 = \{(U^n(m',i',\gamma),Y^n)\in T^n_{P_{U Y},\delta} \;\text{for some $(m',i')\neq (1,1)$}\}.
\end{align*}

By the standard joint typicality argument for Gel'fand-Pinsker coding, under distribution $Q$ the decoding error probability is bounded by
\begin{align*}
    Pr_{Q}\{M\neq\hat{M}\}\to 0
\end{align*}
if $R+R_p\leq I(U;Y )$. The remaining analysis relies on the following soft-covering lemma.
\begin{lemma}\label{lem: soft covering}
    Let $(U,V)\in\mathcal{U}\times\mathcal{V}$ be a pair of random variables with joint distribution $P_{UV}$. Let $\textbf{C}=\{U^n(i):i\in[1:2^{nR}]\}$ be a random collection of sequences $U^n(i)$, each i.i.d. generated by $P_U^n$. We denote the sample value of $\textbf{C}$ by $\mathcal{C}$. Let $P_V$ be the underlying marginal distribution of $V$ induced by the joint distribution $P_{UV}$ and define the output distribution induced by random codebook $\textbf{C}$ as
    \begin{align*}
        Q_{V^n}(v^n)= \frac{1}{|\text{C}|}\sum_{i=1}^{2^{nR}}P_{V^n|U^n}(v^n|U^n(i)),\;\;\text{for any $v^n\in\mathcal{V}^n$},
    \end{align*}
where $P_{V^n|U^n}=\prod_{i=1}^n P_{V|U}$. If $R>I(U;V)$, we have
\begin{align*}
    \mathbb{E}_{\textbf{C}}\left[ || Q_{V^n} - P_V^n ||_{TV}\right] \leq \frac{3}{2}exp\{-\kappa n\}
\end{align*}
 for some $\kappa>0$.
\end{lemma}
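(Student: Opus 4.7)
The plan is to use the classical channel-resolvability argument of Wyner, as refined by Han--Verdu and Cuff. First I would fix $\delta>0$ small enough that $R - I(U;V) > 3\delta$, and decompose $\mathcal{V}^n$ into the $\delta$-letter-typical set $T^n_{P_V,\delta}$ and its complement. By the triangle inequality, $||Q_{V^n} - P_V^n||_{TV}$ is at most $\sum_{v^n \in T^n_{P_V,\delta}} |Q_{V^n}(v^n) - P_V^n(v^n)| + P_V^n((T^n_{P_V,\delta})^c) + \mathbb{E}_{\textbf{C}}[Q_{V^n}((T^n_{P_V,\delta})^c)]$. Since $\mathbb{E}_{\textbf{C}}[Q_{V^n}(v^n)] = P_V^n(v^n)$ for every $v^n$, the last two atypical tails are equal, and each decays exponentially by the AEP applied to the i.i.d.\ source $P_V^n$.

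The main work lies in the typical sum. For each fixed $v^n \in T^n_{P_V,\delta}$, I would further split each summand $P_{V^n|U^n}(v^n|U^n(i))$ according to whether $(U^n(i),v^n) \in T^n_{P_{UV},2\delta}$, writing $Q_{V^n}(v^n) = \bar{Q}_{V^n}(v^n) + \tilde{Q}_{V^n}(v^n)$ for the jointly typical and jointly atypical pieces. Summing $\mathbb{E}_{\textbf{C}}[\tilde{Q}_{V^n}(v^n)]$ over $v^n$ recovers $\Pr\{(U^n,V^n)\notin T^n_{P_{UV},2\delta}\}$, which is exponentially small by the joint AEP. The remaining $\bar{Q}_{V^n}(v^n)$ is an empirical average of $M = 2^{nR}$ i.i.d.\ nonnegative random variables, each upper-bounded by $2^{-n(H(V|U)-2\delta)}$ and with mean within a $(1\pm o(1))$ factor of $P_V^n(v^n) \ge 2^{-n(H(V)+\delta)}$. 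A Chernoff--Hoeffding bound then yields a deviation probability of order $\exp\{-c\,\epsilon^2\, 2^{n(R - I(U;V) - 3\delta)}\}$ for some $c>0$.

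The main obstacle is that this concentration must survive a union bound over the at most $2^{n(H(V)+\delta)}$ typical $v^n$. This is exactly where the hypothesis $R > I(U;V)$ is essential: the doubly-exponentially small concentration failure probability beats the singly-exponentially large typical-set cardinality once $\delta$ is chosen small enough, so with overwhelming probability $|Q_{V^n}(v^n) - P_V^n(v^n)| \leq \epsilon P_V^n(v^n)$ simultaneously for all typical $v^n$. Combining this uniform typical-set bound with the two atypical tails via the triangle inequality and absorbing constants delivers the stated exponent $\kappa > 0$, yielding $\mathbb{E}_{\textbf{C}}[||Q_{V^n}-P_V^n||_{TV}] \le \tfrac{3}{2}\exp\{-\kappa n\}$ for all sufficiently large $n$, with $\kappa$ expressible as a function of $R - I(U;V)$ and $\delta$.
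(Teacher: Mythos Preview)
The paper does not supply its own proof of this lemma; it is stated as the well-known soft-covering lemma and invoked directly (the paper elsewhere cites Cuff's \emph{Distributed Channel Synthesis}, from which this formulation is taken). So there is nothing to compare against; your sketch is a self-contained proof attempt.

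Your typicality-based route is one of the standard arguments and is essentially sound, but two points deserve tightening. First, to land on the exponent $R-I(U;V)-3\delta$ in the concentration step you need a \emph{multiplicative} Chernoff/Bernstein bound of the form $\Pr\{|\bar{Q}-\mu|>\epsilon\mu\}\le 2\exp\{-c\,\epsilon^2 M\mu/B\}$ with $B=2^{-n(H(V|U)-2\delta)}$ and $\mu\ge 2^{-n(H(V)+\delta)}$; the additive Hoeffding inequality would only produce an exponent $R-2I(U;V)-O(\delta)$, which is not enough. Second, as written your ``good event'' yields $\sum_{v^n\in T^n_{P_V,\delta}}|Q_{V^n}(v^n)-P_V^n(v^n)|\le \epsilon$ with $\epsilon$ \emph{fixed}, so $\mathbb{E}_{\mathbf C}[\|Q-P\|_{TV}]\le \epsilon+e^{-\kappa' n}+2\cdot(\text{doubly-exp small})$ does not decay. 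You must take $\epsilon=\epsilon_n=e^{-\alpha n}$ for some $\alpha>0$ small enough that $e^{-2\alpha n}2^{n(R-I(U;V)-3\delta)}$ still diverges; then the union bound over $|T^n_{P_V,\delta}|$ is still crushed by the doubly-exponential concentration, and all three pieces decay like $e^{-\kappa n}$ as claimed. With these two fixes the argument goes through.

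For comparison, Cuff's original proof avoids typical sets and Chernoff altogether: one thresholds the information density $\iota(u^n;v^n)=\log\frac{P^n_{V|U}(v^n|u^n)}{P^n_V(v^n)}$ at $n(I(U;V)+\gamma)$, bounds the above-threshold part by a large-deviation tail for the i.i.d.\ sum $\sum_i\iota(U_i;V_i)$, and controls the below-threshold part directly via Jensen/Cauchy--Schwarz on the variance of $Q_{V^n}$. That route is somewhat shorter and yields explicit constants (whence the $\tfrac{3}{2}$), but your approach is equally valid.
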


Now invoking Lemma \ref{lem: soft covering}, we have
\begin{align*}
\mathbb{E}_{\textbf{C}(m,\gamma)}\left[ || Q_{S^n_e|M=m,\Gamma=\gamma} -P^n_{S_e} ||  \right] \leq \frac{3}{2}exp\{-an\}
\end{align*}
for some $a>0$ if $R_p > I(U;S_e)$, and hence,
\begin{align*}
    \mathbb{E}_{\textbf{C}}\left[ || Q_{M\Gamma S^n_e} -\frac{1}{2^{n(R+R_c)}}P^n_{S_e} ||  \right] \leq \frac{3}{2}\exp\{-an\},
\end{align*}
which gives us 
\begin{align}
    &\mathbb{E}_{\textbf{C}}\left[ || Q_{M I \Gamma S^n_e S^n   X^n Y^n Z^n \hat{S}^n} -\bar{Q}_{M I \Gamma S^n_e S^n   X^n Y^n Z^n \hat{S}^n} ||  \right] \notag \\
    &=\mathbb{E}_{\textbf{C}}\left[ || Q_{M  \Gamma S^n_e} -\bar{Q}_{M  \Gamma S^n_e } ||  \right] \notag \\
    \label{eq: idealized distribution approximation}&= \mathbb{E}_{\textbf{C}}\left[ || Q_{M\Gamma S^n_e} -\frac{1}{2^{n(R+R_c)}}P^n_{S_e} ||  \right] \leq \frac{3}{2}\exp\{-an\}
\end{align}
by Lemma \ref{lem: property of tv}\ref{def: tv property 2}. For a given random codebook $\mathbf{C}$, define the function $g_{\mathbf{C}}(m,\hat{m})=\mathbb{I}\{m\neq \hat{m}\}$. Now follow the same argument as \cite[Eq. (41)-(44)]{goldfeld2019wiretap} with Lemma \ref{lem: property of tv}\ref{def: tv property 1} we have 
\begin{align}\label{neq: error probability}
    Pr_{\bar{Q}}\{M\neq \hat{M}\} \leq \epsilon\cdot \frac{3}{2}\exp\{-an\}\to 0  \;\;\text{as}\;\; n\to \infty
\end{align}

On the other hand, by setting $R+R_p+R_c > I(U;\hat{S})$ and applying Lemma \ref{lem: soft covering}, we have
\begin{align*}
    &\mathbb{E}_{\textbf{C}}\left[ || Q_{\hat{S}^n} - P_{\hat{S}}^n ||  \right]\\
    &=\mathbb{E}_{\textbf{C}}\left[ || Q_{\hat{S}^n} - P_{S}^n ||  \right] \leq \frac{3}{2}\exp \{-bn\}
\end{align*}
for some $b>0$. By the triangle inequality property of the total variation \cite[Property 1(c)]{song2016likelihood}, we have
\begin{align}
    &\mathbb{E}_{\textbf{C}}\left[ || \bar{Q}_{\hat{S}^n} - P_{S}^n ||  \right]\notag \\
     &\leq \mathbb{E}_{\textbf{C}}\left[ || \bar{Q}_{\hat{S}^n} - Q_{\hat{S}^n} ||  \right] + \mathbb{E}_{\textbf{C}}\left[ || Q_{\hat{S}^n} - P_{S}^n ||  \right] \notag\\
     \label{neq: target distribution approximation}&\leq  \frac{3}{2}\exp\{-an\} +  \frac{3}{2}\exp\{-bn\} :=  \frac{3}{2}\exp\{-cn\}
\end{align}
for some $c>0$. For the reconstruction distortion part, we first consider the distortion under distribution $Q$ and define additional error events
\begin{align*}
    &\mathcal{E}_3 = \{(U^n(1,1,\gamma),S^n_e,S^n)\notin T^n_{P_{US_eS },\delta}\},\\
    &\mathcal{E}_4 = \{(X^n,S^n_e,Z^n,\hat{S}^n)\notin T^n_{P_{XS_eZ\hat{S}},\delta}\}
\end{align*}
We say a distortion error $\mathcal{E}_d$ occurs if at least one of the events $\mathcal{E}_1,\mathcal{E}_2,\mathcal{E}_3$ and $\mathcal{E}_4$ happens. By the law of large numbers, it follows that $Pr\{\mathcal{E}_3\}\to 0$ and $Pr\{\mathcal{E}_4\}\to 0$ as $n\to \infty$ and then $P_{d,e} := Pr\{\mathcal{E}_d\}\to 0$ follows. The expectation of the distortion (taken over random codebook, random states and random channel noise) under distribution $Q$ is bounded by
\begin{align*}
    &\mathbb{E}\left[ d(S^n,\hat{S}^n)  \right] \\
    &= (1-P_{d,e})\mathbb{E}\left[ d(S^n,\hat{S}^n) | \mathcal{E}_d^c \right] + P_{d,e}\mathbb{E}\left[ d(S^n,\hat{S}^n) | \mathcal{E}_d \right]\\
    &\overset{(a)}{\leq} (1-P_{d,e})\mathbb{E}_{\textbf{C}}\left[\sum_{s^n,\hat{s}^n}d(s^n,\hat{s}^n)\right.\\
    &\left. \sum_{\substack{m,i,\gamma,s^n_e,\\ x^n,y^n,z^n}}Q_{M I \Gamma S^n_e S^n   X^n Y^n Z^n \hat{S}^n}(m,i,\gamma,s^n_e, s^n,   x^n ,y^n ,z^n, \hat{s}^n)\right] \\
    &\quad\quad\quad\quad\quad\quad\quad\quad + P_{d,e} D_{max}\\
    &\leq  (1-P_{d,e})\sum_{s^n,\hat{s}^n}d(s^n,\hat{s}^n)\\
    &\quad \mathbb{E}_{\textbf{C}}\left[\sum_{\substack{m,i,\gamma,s^n_e,\\ x^n,y^n,z^n}}Q_{M I \Gamma S^n_e S^n   X^n Y^n Z^n \hat{S}^n}(m,i,\gamma,s^n_e, s^n,   x^n ,y^n ,z^n, \hat{s}^n)\right] \\
    &\quad\quad\quad\quad\quad\quad\quad\quad + P_{d,e} D_{max}\\
    &\overset{(b)}{=}(1-P_{d,e})\sum_{s^n,\hat{s}^n}d(s^n,\hat{s}^n)P^n_{S\hat{S}}(s^n,\hat{s}^n) + P_{d,e} D_{max} \\
    &\leq (1-P_{d,e})D + P_{d,e} D_{max},
\end{align*}
where $(a)$ follows by the fact that $\mathcal{E}_d^c$ implies the correct decoding, $(b)$ follows by the definition of $\mathcal{P}(D,P)$. To bound the true distortion (distortion under distribution $\hat{Q}$), we invoke Lemma \ref{lem: property of tv}\ref{def: tv property 1} again and it follows that
\begin{align*}
    &|\mathbb{E}_{\hat{Q}}[d(S^n,\hat{S}^n)] - \mathbb{E}_{Q}[d(S^n,\hat{S}^n)]|\\
     &\leq D_{max}|| Q_{S^n\hat{S}^n} - \hat{Q}_{S^n\hat{S}^n} ||\\
    &\leq D_{max}||Q_{M I \Gamma S^n_e S^n   X^n Y^n Z^n \hat{S}^n} -\bar{Q}_{M I \Gamma S^n_e S^n   X^n Y^n Z^n \hat{S}^n} || \\
    &\leq D_{max}\frac{3}{2}\exp\{-an\}.
\end{align*}
With sufficiently large block length $n$, we have
\begin{align*}
    \mathbb{E}_{\hat{Q}}[d(S^n,\hat{S}^n)] \leq D.
\end{align*}
So far, we have proved the achievability of region $C(D,\frac{3}{2}\exp\{-cn\})$. The proof is completed by using an optimal transportation argument as in \cite{saldi2015output}. Let $T_{\hat{S}^{'n}|\hat{S}^n}$ be a conditional distribution that converts the random sequence $\hat{S}^n$ to $\hat{S}^{'n}$. The argument shows that when the total variation of the distributions of $\hat{S}^n$ and $\hat{S}^{'n}$ is close,  one can construct a random sequence $\hat{S}^{'n}$ such that $P_{\hat{S}^{'n}}=P_S^n$ with an additional distortion that decays exponentially fast with the block length $n$, and the term $T_{\hat{S}^{'n}|\hat{S}^n}$ is called the optimal coupling. For completeness, we provide the argument in Appendix \ref{app: optimal transport argument}. This completes the achievability proof.

\section{proof of theorem \ref{the: cpd with cr and causal csi and strict constraint}}\label{app: proof of theorem de and strict constraint}
In this section, we prove the converse part of Theorem \ref{the: cpd with cr and causal csi and strict constraint}. Note that the reliable communication over the channel is equivalent to considering a channel $P_{YZ|XS_e}(y,z|x,s_e)=\sum_{s}P_{YZ|XS}(y,z|x,s)P_{S|S_e}(s|s_e)$. The bound on $R$ follows similarly to the converse proof of channel with causal side information\cite{shannon1958channels}.
\begin{align*}
    nR &= H(M)\\
    &\leq I(M;Y^n)\\
    &=\sum_{i=1}^n I(M;Y_i|Y^{i-1} )\\
    &\leq \sum_{i=1}^n I(M,\Gamma,S_e^{i-1},X^{i-1},Y^{i-1} ;Y_i)\\
    &\overset{(a)}{=}\sum_{i=1}^n I(M,\Gamma,S_e^{i-1},X^{i-1};Y_i)\\
    &\overset{(b)}{=}\sum_{i=1}^n I(M,\Gamma,S_e^{i-1};Y_i) \\ 
    &\overset{(c)}{=} n \cdot \frac{1}{n}\sum_{i=1}^n I(M,\Gamma,S_e^{Q-1};Y_Q|Q=i)\overset{(d)}{=}nI(U;Y)
\end{align*}
where $(a)$ follows by the Markov chain $Y_i-(S_{e}^{i-1},X^{i-1})-Y^{i-1}$, $(b)$ follows by the fact that $X^{i-1}$ is a deterministic function of $(M,\Gamma,S^{i-1}_e)$, $(c)$ follows by introducing a time-sharing random variable $Q\in[1,\dots,n]$, $d$ follows by setting $U_Q=(M,\Gamma,S_e^{Q-1}),U=(U_Q,Q),Y=Y_Q$. 

For the sum rate, it follows that
\begin{align*}
    n(R + R_c) &\geq H(M,\Gamma) \\
    &\geq I(M,\Gamma;S^n_e,\hat{S}^n)\\
    &=\sum_{i=1}^n I(M,\Gamma;S_{e,i},\hat{S}_i|S_{e}^{i-1},\hat{S}^{i-1})\\
    &\overset{(a)}{=}\sum_{i=1}^n I(M,\Gamma,S_{e}^{i-1},\hat{S}^{i-1};S_{e,i},\hat{S}_i)\\
    &\geq \sum_{i=1}^n I(M,\Gamma,S_{e}^{i-1};S_{e,i},\hat{S}_i)\\
    &=n \cdot \frac{1}{n} \sum_{i=1}^n I(U_Q;S_{e,Q},\hat{S}_Q|Q=i)\\
    &\overset{(b)}{=}n I(U_Q,Q;S_{e,Q},\hat{S}_Q)= nI(U;S_e,\hat{S}).
\end{align*}
where $(a)$ follows by the i.i.d. property of $(S_{e,i},\hat{S}_i),i=1,2,\dots$, $(b)$ follows by the independence between $Q$ and $(S_{e,Q},\hat{S}_Q)$.

\section{optimal transport argument}\label{app: optimal transport argument}
In this section we provide the explanation of the optimal transport argument in Appendix \ref{app: proof of thm noisy csi}, which is a direct application of the argument in \cite{saldi2015output} to our model. Given a cost function $c(S^n,\hat{S}^n)$ such that $c:S^n\times\hat{S}^n\to[0,+\infty)$ is a matric on $\mathcal{S}^n=\hat{\mathcal{S}}^n$, the optimal transportation is defined by
\begin{align}
    \label{def: optimal transport cost}\hat{T}_n(P_{\hat{S}^n},P_S^n)=\inf \{\mathbb{E}\left[ c(S^n,\hat{S}^n) \right]:S^n \sim P_S^n,\; \hat{S}^n \sim P_{\hat{S}^n}\},
\end{align}
where the infimum is taken over all joint distributions $P_{S^n\hat{S}^n}$ giving marginal distributions defined in \eqref{def: optimal transport cost}. The distribution giving the infimum in \eqref{def: optimal transport cost} is called the optimal coupling of $P_{\hat{S}^n}$ and $P_S^n$. We follow the convention in \cite{saldi2015output} and call the corresponding conditional distribution of $S^n$ given $\hat{S}^n$ an optimal coupling as well, denoted by $T_{\hat{S}^{'n}|\hat{S}^n}$. Note that given the distortion function one can further define
\begin{align*}
    \rho_n(S^n,\hat{S}^n)=(\sum_{i=1}^n d(S_i,\hat{S}_i)^p)^{\frac{1}{q}},
\end{align*}
 which is also a metric on $\mathcal{S}^n$. In our case, we have $p=q=1$. However, the arguments can be extended to more general positive $p=q\geq 1$. It follows that 
\begin{align*}
    &\hat{T}_n(P_{\hat{S}^n},P_S^n)\\
    &=\inf \{\mathbb{E}\left[ d(S^n,\hat{S}^n) \right]:S^n \sim P_S^n,\; \hat{S}^n \sim P_{\hat{S}^n}\}\\
    &=\frac{1}{n}\inf\left\{ \mathbb{E}\left[ \sum_{i=1}^n d(S_i,\hat{S}_i)^p \right]^{\frac{1}{q}}:S^n \sim P_S^n,\; \hat{S}^n \sim P_{\hat{S}^n} \right\}^q \\
    &=\frac{1}{n}(W_q(P_{\hat{S}^n},P_S^n))^q,
\end{align*}
where $W_q$ is the Wasserstein distance of order $q$ with cost function $\rho_n$. Using \cite[Theorem 6.15]{villani2009optimal} and \cite[Particular Case 6.16]{villani2009optimal} we have
\begin{align}
    W_q(P_{\hat{S}^n},P_S^n)&\leq 2^{\frac{1}{q'}}\left( \sum_{s^n} \rho_n^q(s^n_0,s^n) |P_{\hat{S}^n}(s^n)-P_S^n(s^n)|\right)^{\frac{1}{q}}\notag \\
    &=2^{\frac{1}{q'}}\left( \sum_{s^n} \sum_{i=1}^n d(s_{0,i},s_i)^p |P_{\hat{S}^n}(s^n)-P_S^n(s^n)|\right)^{\frac{1}{q}}\notag\\
    \label{ine: Wasserstein distance}&\leq D_{max}||P_{\hat{S}^n}-P_S^n||\leq D_{max}\frac{3}{2}\exp\{-cn\}.
\end{align}
for some $c>0$ and $q'$ such that $\frac{1}{q'}+\frac{1}{q}=1$. 
Now we apply the optimal coupling $T_{\hat{S}^{n'}|\hat{S}^n}$ to the reconstructed sequence $\hat{S}^n$. By the triangle inequality of a metric we can bound the distortion as
\begin{align*}
    &\left(\mathbb{E}\left[ (\frac{1}{n}\sum_{i=1}^n d(S_i,\hat{S}_i')^p) \right] \right)^{\frac{1}{q}}\\
    &\leq  \left(\mathbb{E}\left[ (\frac{1}{n}\sum_{i=1}^n d(S_i,\hat{S}_i)^p) \right] \right)^{\frac{1}{q}} + \left(\mathbb{E}\left[ (\frac{1}{n}\sum_{i=1}^n d(\hat{S}_i,\hat{S}_i')^p) \right] \right)^{\frac{1}{q}}\\
    &\overset{(a)}{=}\mathbb{E}[d^n(S^n,\hat{S}^n)] + \hat{T}_n(P_{\hat{S}^n},P_S^n) \leq D + \delta,
\end{align*}
where $(a)$ follows by settig $p=q=1$, $(b)$ follows by Appendix \ref{app: proof of thm noisy csi} and \eqref{ine: Wasserstein distance}. The proof is completed.

\section{proof of theorem \ref{the: cpddr with cr and noisy CSI and CSI}}\label{app: proof of henchman theorem}

    We first consider the system induced by the idealized distribution defined in \eqref{def: idealized distribution} in Appendix \ref{app: proof of thm noisy csi}.
The distribution $P_{Y^n\hat{S}^n|U^n}$ is the same as considering the distribution $P_{X^nS_e^nY^n\hat{S}^n|U^n}$ with additional side information $X^n,S^n_e$ at the encoder side. Since the decoder only observes the side information $Y^n$, considering $(U^n,Y^n,\hat{S}^n)$ is sufficient for the analysis.

We first argue for the best strategy for the henchman-eavesdropper. For a given henchman encoder $f_E$, suppose that the decoder is a stochastic decoder that maps each description $m_E\in[1:2^{nR_E}]$ to a sequence $\hat{S}_E^n\in\mathcal{S}^n$. Then, there exists a minimizer achieving a smaller distortion by choosing the $\hat{s}_E^n$ that minimizes the expectation of the distortion among all possible choices for each $m_E$, that is
\begin{align*}
    g(m_E,y^n) = \mathop{\arg\min}_{\hat{s}^n_E\in\mathcal{S}^n} \mathbb{E}\left[ d(\hat{S}^n,\hat{s}^n_E) \right].
\end{align*}
Hence, randomization does not help and the decoder of the eavesdropper is a deterministic decoder. Given an encoder $f_E$ such that $f_E(\hat{S}^n,Y^n)=M_E\in[1:2^{nR_E}]$ and side information $y^n\in\mathcal{Y}^n$, the decoder maps each $m_E$ to a unique reconstruction sequence with the help of side information $Y^n$. %For each such deterministic decoder $\hat{S}^n_E=g(M_E,Y^n)$, we can write the reconstruction for each symbol as $\hat{S}^n_{E,i}=f((M_E,Y^{i-1},Y_{i+1}^n),Y_i)$ and assign each sample value of $(M_E,Y^{i-1},Y_{i+1}^n)$ an interim symbol $v$. Now the sequence at the eavesdropper side is constructed in a symbol-wise manner. 
Without loss of generality, we can assume that there exists an interim communication codebook $\mathcal{C}_V=\{v^n\}$, and given $Y^n$, the encoder maps each $m_E$ to a unique codeword in this interim codebook such that the eavesdropper can recover the interim codeword $V^n$ based on $M_E$ and $Y^n$. The reconstructed sequence is then determined by $\hat{s}^n_E(M_E,Y^n)=v^n$. The optimal encoder in this case selects the codeword in the interim communication codebook that minimizes $\mathbb{E}[d(\hat{S},\hat{s}^n_E(M_E,Y^n))]$ and sends its description $M_E$ such that the decoder can locate the codeword with the help of the side information. 

The above argument turns the problem into a henchman problem for lossy communication with additional side information at the henchman and eavesdropper sides. Given a random communication codebook $\textbf{C}$, common randomness rate $R_c$ and a distortion value $D$, we now bound the distortion expectation $\mathbb{E}_{\textbf{C}}\left[ \mathbb{E}[d(\hat{S}^n,\hat{s}^n_E(M_E,Y^n))] \right]$. It follows that
\begin{align*}
    &\min_{f_E,\mathcal{C}_V}\mathbb{E}[d(\hat{S}^n,\hat{s}^n_E(M_E,Y^n))] \\
    &= \min_{f_E,\mathcal{C}_V} \mathbb{E}[d(\hat{S}^n,\hat{s}^n_E(M_E,Y^n))|d(\hat{S}^n,\hat{s}^n_E(M_E,Y^n)) > D] \\
    &\quad\quad\quad\quad\quad \cdot Pr\{d(\hat{S}^n,\hat{s}^n_E(M_E,Y^n)) > D\} \\
    &\quad\quad + \mathbb{E}[d(\hat{S}^n,\hat{s}^n_E(M_E,Y^n))|d(\hat{S},\hat{s}^n_E(M_E,Y^n)) \leq  D] \\
    &\quad\quad\quad\quad\quad \cdot Pr\{d(\hat{S}^n,\hat{s}^n_E(M_E,Y^n)) \leq D\}\\
    &\geq \min_{f_E,\mathcal{C}_V} D\cdot Pr\{d(\hat{S}^n,\hat{s}^n_E(M_E,Y^n)) > D\}.
\end{align*}
Note that $Pr\{d(\hat{S}^n,\hat{s}^n_E(M_E,Y^n)) > D\}$ is still a random variable due to the randomness of the communication codebook. It follows that
\begin{align*}
    &\mathbb{E}_{\textbf{C}}\left[ \min_{f_E,\mathcal{C}_V}\mathbb{E}[d(\hat{S}^n,\hat{s}^n_E(M_E,Y^n))] \right] \\
    &\geq D\cdot \mathbb{E}_{\textbf{C}}\left[ \min_{f_E,\mathcal{C}_V}  Pr\{d(\hat{S}^n,\hat{s}^n_E(M_E,Y^n)) > D\} \right] \\
    &\geq D \cdot (1-\tau)Pr_{\textbf{C}}\left\{\min_{f_E,\mathcal{C}_V}  Pr\{d(\hat{S}^n,\hat{s}^n_E(M_E,Y^n)) > D\} > (1-\tau) \right\}
\end{align*}
Hence, it suffices to show $Pr_{\textbf{C}}\left\{\max_{f_E,\mathcal{C}_V}  Pr\{d(\hat{S}^n,\hat{s}^n_E(M_E,Y^n)) \leq D\} > \tau_n \right\}\to 0$ for sufficiently large $n$ and small $\tau_n.$
Define an event $\mathcal{A}=\{(U^n(M,I,\Gamma),\hat{S}^n,Y^n)\in T^n_{P_{UY\hat{S}},\delta}\}$ and $\mathcal{B}=\{(U^n(M,I,\Gamma),Y^n)\in T^n_{P_{UY},\delta}\}$. We write $Pr\{\mathcal{A},\mathcal{B}\}$ as the probability that events $\mathcal{A}$ and $\mathcal{B}$ occur. Note that $Pr\{\mathcal{A}\}\to 1$ as well as $Pr\{\mathcal{B}\}\to 1$  as $n\to\infty$ since they are i.i.d. according to $P_{UY\hat{S}}^n$.  Then it follows that
\begin{align*}
    &\max_{f_E,\mathcal{C}_V}Pr\{d(\hat{S}^n,\hat{s}^n_E(M_E,Y^n)) \leq D,\mathcal{A},\mathcal{B}\}\\
    &=\max_{f_E,\mathcal{C}_V}\sum_{i,m,\gamma}\sum_{y^n}Pr\{d(\hat{S}^n,\hat{s}^n_E(M_E,Y^n))\leq D,U^n(M,I,\Gamma)\\
    &\quad\quad\quad\quad\quad\quad\quad\quad\quad=U^n(m,i,\gamma),Y^n=y^n ,\mathcal{A},\mathcal{B}\} \\
    &=\max_{f_E,\mathcal{C}_V}\sum_{y^n}\sum_{m,i,\gamma} Pr\{U^n(M,I,\Gamma)=U^n(m,i,\gamma),Y^n=y^n,\mathcal{B}\} \\
    &\quad\quad\quad\quad \cdot Pr\{d(\hat{S}^n,\hat{s}^n_E(M_E,Y^n)) \leq D,\mathcal{A}|\mathcal{B},U^n(m,i,\gamma),y^n\} \\
    &\overset{(a)}{\leq} \max_{\mathcal{C}_V,y^n}\sum_{m,i,\gamma} Pr\{U^n(m,i,\gamma)|y^n,\mathcal{B}\} \\
    &\quad \cdot Pr\{\min_{v^n\in \mathcal{V}^n}d(\hat{S}^n,\hat{s}^n_E(m_E(v^n),Y^n)) \leq D|U^n(m,i,\gamma),y^n,\mathcal{B}\} \\
    &\overset{(b)}{\leq} \max_{\mathcal{C}_V,y^n,v^n}2^{nR_E}\sum_{m,i,\gamma} Pr\{U^n(m,i,\gamma)\}\\
    &\quad\quad\quad\quad \cdot \mathbb{I}\{(U^n(m,i,\gamma),y^n)\in T^n_{P_{UY},\delta}\}\\
    &\quad\quad\quad\quad \cdot Pr\{d(\hat{S}^n,\hat{s}^n_E(m_E,Y^n)) \leq D|U^n(m,i,\gamma),y^n,\mathcal{B}\} \\
    &\overset{(c)}{\leq} \max_{v^n,y^n}2^{nR_E}\sum_{\gamma} 2^{-n(R+R_p+R_c-I(U;Y)-\delta)}\\
    &\quad\quad\quad\quad \cdot Pr\{d(\hat{S},\hat{s}^n_E(m_E,Y^n)) \leq D,\mathcal{A}|U^n(\gamma),y^n,\mathcal{B}\} \\
    &=\max_{v^n,y^n}2^{-n(R_c-R_E-2\delta)}\\
    &\quad\quad \cdot \sum_{\gamma} Pr\{d(\hat{S}^n,\hat{s}^n_E(m_E,Y^n)) \leq D,\mathcal{A}|U^n(\gamma),y^n,\mathcal{B}\}\\
\end{align*}
where $(a)$ follows by equivalently considering the case where the codeword $v^n$ minimizing the distortion is in the codebook $\mathcal{C}_V$ and is mapped to the lossy description $m_E$ by the optimal encoder $f_E$, the term $m_E(v^n)$ represents the encoding result when $v^n$ is selected. For simplicity, we omit $v^n$ and write $m_E(v^n)$ as $m_E$ in the following analysis. $(b)$ follows by the fact that the encoder selects codeword $v^n$ and sends index $m_E$ such that $v^n(m_E,y^n)$ is a deterministic function at the eavesdropper decoder side, which means given $y^n$, there are only $2^{nR_E}$ codewords in the codebook $\mathcal{C}_V$ that is possible to be picked by the encoder,  $(c)$ follows by the fact that the codeword $U^n(M,I,\Gamma)$ is uniformly selected from the codebook, and the two independently generated sequences $U^n(m,i,\gamma)$ and $y^n$ are jointly typical with probability lower bounded by $2^{-n(I(U;Y)+\delta)}$. Hence, there are at most $2^{n(R+R_p+R_c-I(U;Y)-\delta)}$ codewords in the codebook that are jointly typical with a given $y^n$, in $(c)$ we omit the indices $(m,i)$ since there are only $\gamma$ codewords is possible and the value of $(m,i)$ does not affect the probability. The remaining proof is similar to the argument in \cite[Sec. VIII]{schieler2016henchman} and we have $Pr_{\textbf{C}}\left\{\min_{f,\mathcal{C}_V}  Pr\{d(\hat{S},\hat{s}^n_E(M_E,Y^n)) \leq D\} > \tau_n \right\}\to 0$ if
\begin{align*}
    &R_E \leq R(D,P_{\hat{S}Y}),\\
    &R_E \leq R(D,P_{U\hat{S}Y}) + R_c.
\end{align*}

Finally, define event $\mathcal{E}_E$ as
\begin{align*}
    \mathcal{E}_E = \{d(\hat{S}^{'n},\hat{s}^n_E(M_E,Y^n))\geq \pi(R_E,R_c,P_{UY\hat{S}})-\epsilon\},
\end{align*}
where
\begin{align*}
    &\pi(R_E,R_c,P_{UY\hat{S}})=\\
    &\left\{
        \begin{aligned}
            &D(R_E,P_{\hat{S}Y})\quad\quad\quad\quad\quad\quad\quad\quad\quad\quad\quad\quad\;\; \text{if $R_E<R_c,$}\\
            &\min\{D(R_E,P_{\hat{S}Y}), D(R_E-R_c,P_{U\hat{S}Y})\}\quad\text{if $R_E\geq R_c.$}
        \end{aligned}
    \right.
\end{align*}
By taking the expectation over random codebook $\mathbf{C}$ and setting the length $n$ sufficiently large, we have $\mathbb{E}_{\textbf{C}}\left[ \min_{f_E,\mathcal{C}_V} Pr\{\mathcal{E}_E\}\right]\to 1$ and hence $\mathbb{E}_{\textbf{C}}\left[ \min_{f_E,\mathcal{C}_V}\mathbb{E}_Q[d(\hat{S}^n,\hat{S}^n_E)] \right]\geq D$ under the idealized distribution $Q$ defined in Eq. \eqref{def: idealized distribution} by setting $D=\pi(R_E,R_c,P_{UY\hat{S}})-\epsilon$. By Lemma \ref{lem: property of tv}\ref{def: tv property 1}, there exists a realization of the codebook such that $\mathbb{E}_{\textbf{C}}\left[ \min_{f_E,\mathcal{C}_V}\mathbb{E}_{\bar{Q}}[d(\hat{S}^n,\hat{S}^n_E)] \right]\geq D$ under the induced distribution $\bar{Q}$ defined in \eqref{def: induced distribution}, with the error probability and distortion constraint at the state estimator side being satisfied. It is proved that region $C(D,D_E,\epsilon)$ is achievable. To perfectly reconstruct the state distribution, the state estimator further uses an optimal transport $\hat{T}_{\hat{S}^{'n}|\hat{S}^n}$ such that $P_{\hat{S}^{'n}}=P_{S}^n$. Although the strategy of the henchman in the above argument only works for the case where the estimator reconstructs the state in a memoryless way, we always have the following inequality.
\begin{align*}
    \mathbb{E}\left[ d(\hat{S}^{n},\hat{S}^n_E) \right] &\leq \mathbb{E}\left[ d(\hat{S}^{'n},\hat{S}^n) \right] + \mathbb{E}\left[ d(\hat{S}^{'n},\hat{S}^n_E) \right]\\
    &\leq \hat{T}_n(P_{\hat{S}^n},P_S^n) + \mathbb{E}\left[ d(\hat{S}^{'n},\hat{S}^n_E) \right]\\
    &\leq 2^{-n\nu} + \mathbb{E}\left[ d(\hat{S}^{'n},\hat{S}^n_E) \right],
\end{align*}
where $\nu>0$, $\hat{T}_n(P_{\hat{S}^n},P_S^n)$ is the optimal transport cost. This is the case that we assume that the henchman observes the interim reconstructed sequence $\hat{S}^n$ and uses all of its bits to describe this interim sequence, which is obviously suboptimal compared to describing the final output sequence $\hat{S}^{'n}$. Thus, it follows that $\mathbb{E}\left[ d(\hat{S}^{'n},\hat{S}^n_E) \right] \geq \mathbb{E}\left[ d(\hat{S}^{n},\hat{S}^n_E) \right] - 2^{-n\nu}$ after the optimal transport and the proof is completed with $n\to\infty$.

\section{Converse part of corollary \ref{coro: deterministic cpddr with cr and noisy CSI and CSI}}\label{app: converse proof of corollary henchman deterministic channel}
In this section, we give the converse part of Corollary \ref{coro: deterministic cpddr with cr and noisy CSI and CSI}. The extended version of the type covering lemma is stated as follows.
\begin{lemma}{\cite[Lemma 4]{schieler2016henchman}}
    Let $\tau>0$ and $r>0$. Fix a joint type $P_{XY}$ on $\mathcal{X}^n\times\mathcal{Y}^n$, and let $y^n\in T_{Y}^n$. For $n>N$, where $N$ is a sufficiently large number related to $\tau$, there exists a codebook $\mathcal{C}(y^n,P_{XY})\subseteq \mathcal{Z}^n$ such that
    \begin{itemize}
        \item $\frac{1}{n}\log |\mathcal{C}(y^n,P_{XY})| \leq r,$
        \item For all $(x^n,y^n)\in T_{XY}^n$, 
        \begin{align*}
            \min_{z^n\in \mathcal{C}(y^n,P_{XY})}d(x^n,z^n) \leq D(r,P_{XY}) + \tau.
        \end{align*}
    \end{itemize}
\end{lemma}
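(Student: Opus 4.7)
The plan is to adapt the classical type-covering lemma of Csisz\'ar–K\"orner (\cite[Lemma 9.1]{csiszar2011information}) to the setting with common side information $y^n$ by conditioning throughout on $y^n$. I would use the conditional rate–distortion characterization
\begin{align*}
D(r,P_{XY}) = \min_{P_{Z|XY}:\, I(X;Z|Y)\leq r} \mathbb{E}_{P_{XY}P_{Z|XY}}[d(X,Z)],
\end{align*}
choose $\epsilon>0$ small enough that $D(r-2\epsilon,P_{XY})-D(r,P_{XY})\leq \tau/2$ by continuity of $D(\cdot,P_{XY})$ in its first argument, and fix a minimizer $P^*_{Z|XY}$ at rate $r-2\epsilon$ with induced conditional $P^*_{Z|Y}(z|y)=\sum_x P_{X|Y}(x|y)P^*_{Z|XY}(z|x,y)$.

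For the given $y^n\in T_Y^n$ I would generate a random codebook $\{Z^n(j)\}_{j=1}^M$ with $M=2^{n(r-\epsilon)}$, each $Z^n(j)$ drawn independently as $\prod_{i=1}^n P^*_{Z|Y}(\cdot|y_i)$. For any fixed $x^n$ with $(x^n,y^n)\in T_{XY}^n$, the covering lemma (\cite[Lemma 3.3]{el2011network}) applied conditionally on $y^n$ shows that the probability no codeword is jointly $\delta$-typical with $(x^n,y^n)$ under $P_{XY}P^*_{Z|XY}$ decays doubly exponentially in $n$, since $r-\epsilon>I_{P^*}(X;Z|Y)$ by the choice of $P^*$. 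Choosing $\delta\leq \tau/(2d_{\max})$ and applying the typical-average lemma on the good event gives
\begin{align*}
d(x^n,z^n)\leq \mathbb{E}_{P^*}[d(X,Z)]+\tau/2 \leq D(r-2\epsilon,P_{XY})+\tau/2 \leq D(r,P_{XY})+\tau.
\end{align*}

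The final step is to promote this pointwise covering statement to a simultaneous one: since $|T^n_{X|Y}[y^n]|\leq 2^{nH(X|Y)}$ is only exponentially large, a union bound against the doubly exponential failure probability still tends to zero for $n>N$, with $N$ depending only on $\tau$ and the alphabet sizes. Hence for such $n$ the random codebook covers every $x^n\in T^n_{X|Y}[y^n]$ with positive probability, and a deterministic codebook of rate at most $r-\epsilon<r$ satisfying both requirements exists. The main obstacle is ensuring that the per-$x^n$ failure probability decays doubly exponentially rather than merely exponentially (the latter being insufficient to survive the union bound over the type class). To secure this I would invoke either a second-moment/expurgation argument over the conditional type class, or the packing-style covering estimate of \cite[Problem 3.2]{csiszar2011information}; the remaining continuity and typicality steps are routine.
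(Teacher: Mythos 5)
The paper does not actually prove this lemma: it is imported verbatim as \cite[Lemma 4]{schieler2016henchman}, i.e.\ an extension of the type covering lemma \cite[Lemma 9.1]{csiszar2011information} to common side information $y^n$, so there is no in-paper argument to compare against. Your reconstruction follows the standard route behind such results (conditional distortion-rate characterization, a test channel at a slightly smaller rate, random codewords drawn conditionally on $y^n$, a per-sequence covering bound, and a union bound over the conditional type class), and in outline it is sound. One remark on an unnecessary hedge: the doubly exponential failure probability does not require a second-moment or expurgation argument — it is immediate from independence of the $2^{n(r-\epsilon)}$ codewords together with the joint-typicality lower bound $p\geq 2^{-n(I(X;Z|Y)+\delta')}$, via $(1-p)^{M}\leq e^{-pM}$; this survives the union bound over the at most $2^{nH(X|Y)}$ sequences in $T^n_{X|Y}[y^n]$ exactly as you say. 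The continuity step is also fine and, in fact, uniform: since $D(\cdot,P_{XY})$ is convex, nonincreasing and bounded by $d_{\max}$, one has $D(r-2\epsilon,P_{XY})-D(r,P_{XY})\leq c\,\epsilon\, d_{\max}/r$ with a constant independent of $P_{XY}$.

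The one point you assert but do not establish — and it is the real content of the lemma — is that $N$ can be chosen as a function of $\tau$ and the alphabets only, uniformly over the joint type $P_{XY}$ (which itself changes with $n$) and over the optimizing test channel $P^{*}_{Z|XY}$. With weak or robust typicality, the thresholds in the conditional typicality lemma and in the lower bound on $|T^n_{\delta}(Z|x^n,y^n)|$ degrade with the smallest positive probability of the governing distribution, which for types can be as small as $1/n$, so the per-codeword bound $p\geq 2^{-n(I+\delta')}$ is not automatically uniform in the sense required. The clean fix — and the way the type-covering proofs in \cite{csiszar2011information} and \cite{schieler2016henchman} proceed — is to stay within the method of types throughout: draw codewords uniformly from a best conditional type class $T^n_{Z|Y}(y^n)$ approximating $P^{*}_{Z|Y}$ and replace all typicality estimates by exact type-counting bounds, which hold with polynomial factors $(n+1)^{|\mathcal{X}||\mathcal{Y}||\mathcal{Z}|}$ uniformly over types. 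This uniformity matters for how the lemma is used in Appendix \ref{app: converse proof of corollary henchman deterministic channel}, where the codebook is indexed by the random empirical type of $(\hat{S}^n,Y^n)$ and a single $N(\tau)$ must serve all types simultaneously. With that substitution (which you gesture at in your final sentence), your argument goes through.
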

By invoking the above lemma, the henchman observes the channel output $y^n$ and the joint type of $(\hat{s}^n,y^n)$. Since the number of types is upper bounded by $(n+1)^{|\hat{\mathcal{S}}||\mathcal{Y}|}$, the description of the types is negligible. The henchman finds the lossy description $z^n$ in the corresponding codebook and sends its index in the codebook and the index of the codebook (the joint type of $\hat{s}^n,y^n$). It follows that
\begin{align*}
    \mathbb{E}\left[d_E(\hat{S}^n,\hat{S}^n_E)\right] &\leq \mathbb{E}\left[ D(R_E+\epsilon,T_{\hat{S}^nY^n})\right] + \epsilon \\
    &\overset{(a)}{\leq} D(R_E+\epsilon,\mathbb{E}\left[T_{\hat{S}^nY^n}\right])\\
    &\overset{(b)}{=}D(R_E+\epsilon,P_{\hat{S}_QY_Q}) + \epsilon \\
    &\overset{(c)}{=}D(R_E+\epsilon,P_{\hat{S}Y}) + \epsilon,
\end{align*}
where $(a)$ is by the fact that $D(R_E+\epsilon,P_{\hat{S}Y})$ is concave in $P_{\hat{S}Y}$, and we defer the proof to the next paragraph,  $(\hat{S}_QY_Q)$ in $(b)$ are defined in the proof of Corollary \ref{coro: no common randomness capacity}, $(c)$ follows by setting $Y=Y_Q,\hat{S}=\hat{S}_Q$. 
To see the concavity of $D(R_E,P_{\hat{S}Y})$ (we omit $\epsilon$ here for simplicity), we define distribution $P_{\hat{S}Y}^{\lambda}=\lambda P_{\hat{S}Y}^{1}+(1-\lambda)P_{\hat{S}Y}^{2}$ for some $\lambda\in[0,1]$. Further let $I^{\lambda}(\hat{S};\hat{S}_E|Y)$ be mutual information computed according to the distribution $P_{\hat{S}Y}^{\lambda}$, and $I^{i}(\hat{S};\hat{S}_E|Y)$ be mutual information computed according to the distribution $P_{\hat{S}Y}^{i},i=1,2.$ Define $\mathbb{E}^{\lambda}[d(\hat{S}^n,\hat{S}^n_E)],\mathbb{E}^{i}[d(\hat{S}^n,\hat{S}^n_E)],i=1,2$ similarly. We have
\begin{align*}
    &D(R_E,P_{\hat{S}Y}^{\lambda}) \\
    &= \min_{\substack{P_{\hat{S}_E|\hat{S}Y}:\\I^{\lambda}(\hat{S};\hat{S}_E|Y)\leq R_E}} \mathbb{E}^{\lambda}[d(\hat{S}^n,\hat{S}^n_E)]\\
    &=\min_{\substack{P_{\hat{S}_E|\hat{S}Y}:\\I^{\lambda}(\hat{S};\hat{S}_E|Y)\leq R_E}} \lambda \mathbb{E}^{1}[d(\hat{S}^n,\hat{S}^n_E)] + (1-\lambda)\mathbb{E}^{2}[d(\hat{S}^n,\hat{S}^n_E)]\\
    &\overset{(d)}{\geq} \min_{\substack{P_{\hat{S}_E|\hat{S}Y}:\\ \lambda I^{1}(\hat{S};\hat{S}_E|Y) + \\(1-\lambda)I^{2}(\hat{S};\hat{S}_E|Y)\leq R_E}} \lambda \mathbb{E}^{1}[d(\hat{S}^n,\hat{S}^n_E)] + (1-\lambda)\mathbb{E}^{2}[d(\hat{S}^n,\hat{S}^n_E)]\\
    &\geq \min_{\substack{P_{\hat{S}_E|\hat{S}Y}:\\ \lambda I^{1}(\hat{S};\hat{S}_E|Y) + \\(1-\lambda)I^{2}(\hat{S};\hat{S}_E|Y)\leq R_E}} \lambda \mathbb{E}^{1}[d(\hat{S}^n,\hat{S}^n_E)] \\
    &\quad\quad\quad\quad\quad\quad\quad\quad + \min_{\substack{P_{\hat{S}_E|\hat{S}Y}:\\ \lambda I^{1}(\hat{S};\hat{S}_E|Y) + \\(1-\lambda)I^{2}(\hat{S};\hat{S}_E|Y)\leq R_E}} (1-\lambda) \mathbb{E}^{2}[d(\hat{S}^n,\hat{S}^n_E)]\\
    &\overset{(e)}{\geq}\min_{\substack{P_{\hat{S}_E|\hat{S}Y}:\\ I^{1}(\hat{S};\hat{S}_E|Y)\leq R_E \\I^{2}(\hat{S};\hat{S}_E|Y)\leq R_E}} \lambda \mathbb{E}^{1}[d(\hat{S}^n,\hat{S}^n_E)] \\
    &\quad\quad\quad\quad\quad\quad\quad\quad + \min_{\substack{P_{\hat{S}_E|\hat{S}Y}:\\ I^{1}(\hat{S};\hat{S}_E|Y) \leq R_E \\I^{2}(\hat{S};\hat{S}_E|Y)\leq R_E}} (1-\lambda) \mathbb{E}^{2}[d(\hat{S}^n,\hat{S}^n_E)]\\
    &\geq \min_{\substack{P_{\hat{S}_E|\hat{S}Y}:\\ I^{1}(\hat{S};\hat{S}_E|Y)\leq R_E }} \lambda \mathbb{E}^{1}[d(\hat{S}^n,\hat{S}^n_E)] \\
    &\quad\quad\quad\quad\quad\quad\quad\quad + \min_{\substack{P_{\hat{S}_E|\hat{S}Y}: \\I^{2}(\hat{S};\hat{S}_E|Y)\leq R_E}} (1-\lambda) \mathbb{E}^{2}[d(\hat{S}^n,\hat{S}^n_E)]\\
    &=\lambda D(R_E,P_{\hat{S}Y}^{1}) + (1-\lambda)D(R_E,P_{\hat{S}Y}^{2}).
\end{align*}
where $(d)$ follows by the fact that $I(\hat{S};\hat{S}_E|Y)$ is a concave function of $P_{\hat{S}Y}$, $(e)$ follows by the nonnegative property of mutual information.

To show that the equality in $(b)$ holds, let $p_{\hat{S}^nY^n}$ be the distribution coincide with the type of $(\hat{S}^n,Y^n)$. It follows that
\begin{align*}
    \mathbb{E}[p_{\hat{S}^nY^n}(\hat{s},y)] &= \sum_{\hat{s}^n,y^n}P_{\hat{S}^nY^n}(\hat{s}^n,y^n)p_{\hat{s}^ny^n}(\hat{s},y)\\
    &=\sum_{\hat{s}^n,y^n}P_{\hat{S}^nY^n}(\hat{s}^n,y^n) \frac{N(\hat{s},y|\hat{s}^n,y^n)}{n}\\
    &=\sum_{\hat{s}^n,y^n}P_{\hat{S}^nY^n}(\hat{s}^n,y^n) \frac{1}{n}\sum_{i=1}^n \mathbb{I}\{(\hat{s}_i,y_i)=(\hat{s},y)\}\\
    &=\frac{1}{n}\sum_{i=1}^n\sum_{\hat{s}_i,y_i}P_{\hat{S}_iY_i}(\hat{s}_i,y_i)\mathbb{I}\{(\hat{s}_i,y_i)=(\hat{s},y)\}\\
    &=\frac{1}{n}\sum_{i=1}^nP_{\hat{S}_iY_i}(\hat{s},y)\\
    &=\frac{1}{n}\sum_{i=1}^nP_{\hat{S}_QY_Q|Q}(\hat{s},y|Q=i)=P_{\hat{S}_QY_Q}(\hat{s},y).
\end{align*}
So far we have the region
\begin{equation*}
    C_{\epsilon}(D,D_E,0) = \left\{
    \begin{aligned}
        &(R, 0)\in\mathbb{R}^2: R\geq 0, \\
        &\exists P_{S_eS UXYZ\hat{S}} \in \mathcal{P},\\
        &R \leq H(Y|S_e) + \epsilon,\\
        &D \geq \mathbb{E}[d(S,\hat{S})],\\
        &D_E \leq D(R_E,P_{Y\hat{S}})+ \epsilon.
    \end{aligned}
    \right.
\end{equation*}
The argument of Eq. (96) in \cite{schieler2016henchman} is sufficient to show $\cap_{\epsilon>0}C_{\epsilon}(D,D_E,0) = C(D,D_E,0)$. This completes the proof of the converse part.

\end{document}